\documentclass[11pt]{amsart}

\usepackage[utf8]{inputenc}
\usepackage[T1]{fontenc}
\usepackage{lmodern}
\usepackage[a4paper,margin=1in]{geometry}
\usepackage{microtype}
\usepackage{amsmath,amssymb,amsthm,mathtools}
\usepackage{mathrsfs}
\usepackage{enumitem}
\usepackage{booktabs}
\usepackage{longtable}
\usepackage{array}
\usepackage{graphicx}
\usepackage{needspace}
\usepackage[hidelinks]{hyperref}
\usepackage[nameinlink,noabbrev]{cleveref}
\usepackage{tikz}
\usetikzlibrary{matrix,positioning,calc,arrows.meta,fit,shapes.geometric,backgrounds}
\usepackage{caption}

\allowdisplaybreaks[2]
\numberwithin{equation}{section}

\hypersetup{
  pdftitle={Closure Atlases and Local-to-Global Obstructions in Finite Closure Systems},
  pdfauthor={Jaehwan Kim}
}

\theoremstyle{plain}
\newtheorem{theorem}{Theorem}[section]
\newtheorem{lemma}[theorem]{Lemma}
\newtheorem{proposition}[theorem]{Proposition}
\newtheorem{corollary}[theorem]{Corollary}

\theoremstyle{definition}
\newtheorem{definition}[theorem]{Definition}
\newtheorem{example}[theorem]{Example}
\newtheorem{counterexample}[theorem]{Counterexample}

\theoremstyle{remark}
\newtheorem{remark}[theorem]{Remark}

\crefname{definition}{Definition}{Definitions}
\Crefname{definition}{Definition}{Definitions}
\crefname{example}{Example}{Examples}
\Crefname{example}{Example}{Examples}
\crefname{counterexample}{Counterexample}{Counterexamples}
\Crefname{counterexample}{Counterexample}{Counterexamples}
\crefname{construction}{Construction}{Constructions}
\Crefname{construction}{Construction}{Constructions}
\crefname{theorem}{Theorem}{Theorems}
\Crefname{theorem}{Theorem}{Theorems}
\crefname{proposition}{Proposition}{Propositions}
\Crefname{proposition}{Proposition}{Propositions}
\crefname{lemma}{Lemma}{Lemmas}
\Crefname{lemma}{Lemma}{Lemmas}
\crefname{corollary}{Corollary}{Corollaries}
\Crefname{corollary}{Corollary}{Corollaries}
\crefname{remark}{Remark}{Remarks}
\Crefname{remark}{Remark}{Remarks}
\crefname{figure}{Figure}{Figures}
\Crefname{figure}{Figure}{Figures}
\crefname{table}{Table}{Tables}
\Crefname{table}{Table}{Tables}

\newcommand{\Pow}{\mathcal P}
\newcommand{\Cl}{\operatorname{cl}}
\newcommand{\Fix}{\operatorname{Fix}}
\newcommand{\C}{\mathcal C}
\newcommand{\X}{\mathfrak X}

\newcommand{\F}{\mathfrak F}
\newcommand{\A}{\mathfrak A}
\newcommand{\Sig}{\Sigma}
\newcommand{\Obs}{\operatorname{Obs}}
\newcommand{\CA}{C_{\mathfrak A}}
\newcommand{\GA}{G_{\mathfrak A}}
\newcommand{\Tcup}{T^{\cup}}
\newcommand{\restr}[1]{\upharpoonright_{#1}}

\title[Closure atlases and local-to-global obstructions]{Closure Atlases and Local-to-Global Obstructions in Finite Closure Systems}
\author{Jaehwan Kim}
\address{Hankuk Academy of Foreign Studies, Yongin-si, Gyeonggi-do 17035, Republic of Korea}
\email{020080@hafs.hs.kr}
\subjclass[2020]{Primary 06A15; Secondary 03B22, 03C13}
\keywords{finite closure operators, finite closure systems, closure atlases, local-to-global propagation, conservative realization, finite obstruction criteria, indexed truth spaces}
\date{June 19, 2026}

\begin{document}

\begin{abstract}
This paper studies finite closure operators on overlapping finite universes and gives an exact local-to-global obstruction criterion for conservative globalization.  Given a finite family of local closure systems, its atlas-generated closure is obtained by repeatedly applying the local closure operators to the parts visible in each chart.  This closure is the least global closure operator extending all chart closures.  A chart-visible obstruction is a consequence produced by this global propagation that lies inside a chart but is not validated by that chart's own closure operator.  The main theorem proves that a finite closure atlas has a global conservative realization exactly when no such obstruction occurs; in that case the atlas-generated closure itself is the conservative realization.  The obstruction condition is finite and directly computable.

The paper also records the indexed representation layer motivating the terminology.  For a finite closure system, an indexed truth space selects closed theories as contexts and represents each element by the region of selected closed theories containing it.  Closure consequence is always sound for region inclusion, and the full indexed space of all closed theories recovers the original closure consequence exactly; reduced indexed spaces can therefore create spurious region consequences by deleting separating closed theories.  A formal opposite gives a four-region membership decomposition--only one, only the other, both, and neither--unless additional separation assumptions are imposed.  Finally, overlap-compatible local closed theories glue by canonical union under the atlas-generated closure.  The framework is finite, structural, and closure-theoretic; the logical terminology is used only as an interpretation of the underlying closure data.
\end{abstract}

\maketitle

\setcounter{tocdepth}{1}
\tableofcontents

\section{Introduction}
\label{sec:introduction}

This paper studies finite closure operators on finite sets and their local-to-global behavior under overlaps.  The guiding problem is structural: given several local closure systems on overlapping finite universes, when can they be realized as exact restrictions of one global closure operator, and what obstruction prevents such a realization?  The indexed truth-space language used below is one interpretation of the same closure data.  The technical content is closure-theoretic: least generated closures, conservative restriction, finite obstruction detection, and gluing of compatible closed sets.

The starting point is a finite closure system
\[
  \C=(\Sig,\Cl),
\]
where \(\Sig\) is a finite universe, often read as a set of sentences, and \(\Cl:\Pow(\Sig)\to\Pow(\Sig)\) is an extensive, monotone, idempotent operator.  A closed set \(T=\Cl(T)\) is called a closed theory when the elements of \(\Sig\) are read logically.  For \(S\subseteq\Sig\) and \(\varphi\in\Sig\), the notation \(S\vdash_{\Cl}\varphi\) means precisely \(\varphi\in\Cl(S)\).  Thus the logical terminology is not an additional semantics; it is notation for a finite closure operator.

The central construction is a finite closure atlas
\[
  \A=\{(\Sig_\alpha,\Cl_\alpha)\}_{\alpha\in A}
\]
whose local universes may overlap.  Its global universe is \(\Sig_{\A}=\bigcup_{\alpha\in A}\Sig_\alpha\).  Local propagation is encoded by
\[
  G_{\A}(X)=X\cup \bigcup_{\alpha\in A}
       \Cl_\alpha(X\cap \Sig_\alpha),
  \qquad X\subseteq\Sig_{\A},
\]
and iterating this finite monotone operator to a fixed point yields the atlas-generated closure \(C_{\A}\).  This is the least global closure operator extending all local chart closures.  In particular, any global closure operator that preserves the local implications must contain every consequence generated by \(C_{\A}\).

The main theorem gives an exact obstruction criterion for conservative globalization.  A global conservative realization is a global closure operator \(D\) such that every chart is recovered exactly by restriction:
\[
  D(S)\cap\Sig_\alpha=\Cl_\alpha(S)
  \qquad(\alpha\in A,\ S\subseteq\Sig_\alpha).
\]
A chart-visible obstruction is a triple \((\alpha,S,\varphi)\) such that \(S\subseteq\Sig_\alpha\), \(\varphi\in\Sig_\alpha\), and
\[
  \varphi\in C_{\A}(S)
  \qquad\text{but}\qquad
  \varphi\notin \Cl_\alpha(S).
\]
Thus the least global propagation has forced, inside chart \(\alpha\), a consequence not sanctioned by that chart itself.  The theorem proves
\[
  \Obs(\A)=\varnothing
  \quad\Longleftrightarrow\quad
  \A\text{ has a global conservative realization},
\]
and in the obstruction-free case \(C_{\A}\) itself is that realization.  Since all universes are finite, the criterion also gives a direct finite algorithm: iterate \(G_{\A}\) on every chart-local premise set and compare the result with the corresponding local closure.

The single-system material supplies the representation and comparison language used before the atlas theorem.  An indexed truth space over \(\C\) selects a finite family of closed theories,
\[
  \X=(\Sig,\Cl,I,\tau),
  \qquad \tau:I\to \Fix(\Cl),
\]
and represents each element \(\varphi\in\Sig\) by the region
\[
  V_{\X}(\varphi)=\{i\in I:\varphi\in\tau(i)\}.
\]
Closure consequence always implies truth-region inclusion.  Conversely, when all closed theories in \(\Fix(\Cl)\) are used as indices, truth-region inclusion recovers closure consequence exactly.  This identifies precisely what is lost when one replaces a closure system by a reduced family of selected closed theories: omitted closed theories may be the separating witnesses that distinguish nonconsequence from consequence.

A second auxiliary layer is decision bookkeeping.  For an element \(a\) and a designated formal opposite \(\bar a\), no logical behavior of \(\bar a\) is assumed from the notation alone.  The selected index set decomposes into four regions: indices containing \(a\) but not \(\bar a\), indices containing \(\bar a\) but not \(a\), indices containing both, and indices containing neither.  The familiar positive-negative-undecided trichotomy appears only after imposing an additional opposite-separation condition.  This layer is not used in the proof of the local-to-global obstruction theorem; it only separates finite membership bookkeeping from any implicit semantics of negation.

Conservative extension supplies the bridge between the single-system and atlas settings.  If \(\C_Y=(\Sig_Y,\Cl_Y)\) extends \(\C_X=(\Sig_X,\Cl_X)\), conservativity means that no new old-universe consequences are introduced:
\[
  \Cl_Y(S)\cap \Sig_X=\Cl_X(S)
  \qquad(S\subseteq\Sig_X).
\]
For full indexed truth spaces, this is equivalent to an exact pullback identity for old truth regions along reducts of closed theories.  The atlas theorem imposes the same conservative-restriction requirement simultaneously for several overlapping charts.

The final technical result is a canonical gluing statement.  After fixing the atlas-generated closure, any family of locally closed theories that agrees exactly on overlaps has a union that restricts back to the original local theories and is closed under \(C_{\A}\).  This result concerns compatible local closed theories; it is separate from the obstruction theorem, which concerns conservative realization of local consequence relations.

The formal results of the paper do not depend on any stronger philosophical thesis.  They are finite statements about closure operators, closed sets, local propagation, and conservative restriction.  The terms ``truth region,'' ``closed theory,'' and ``formal opposite'' are used to motivate one intended interpretation, but the definitions and proofs stand as order-theoretic closure constructions.

The paper is organized as follows.  \Cref{sec:related-work-scope} places the manuscript in relation to standard closure-theoretic, order-theoretic, model-theoretic, and local-to-global terminology.  \Cref{sec:finite-closure-systems} introduces finite closure systems, indexed truth spaces, truth regions, and the full truth-region representation theorem.  \Cref{sec:decision-decompositions} records the auxiliary four-region decomposition for formal opposites.  \Cref{sec:conservative-closure-extensions} proves the conservative-extension pullback theorem for full truth spaces.  \Cref{sec:closure-atlases} defines finite closure atlases and the atlas-generated closure.  \Cref{sec:obstructions} proves the local-to-global obstruction theorem and gives the finite obstruction algorithm.  \Cref{sec:closed-theory-gluing} proves canonical gluing for compatible local closed theories.  \Cref{sec:conclusion} summarizes the scope of the framework and records conservative completions as future work.

\section{Related Work and Scope}
\label{sec:related-work-scope}

The primary technical background of the paper is the order-theoretic and lattice-theoretic study of closure operators.  A closure system on a finite set may be represented equivalently by an extensive, monotone, idempotent operator, by its lattice of closed sets, or by suitable finite implication data.  This places the formalism near classical lattice theory and closure systems \cite{Birkhoff1967,DaveyPriestley2002}, finite closure systems and implicational systems \cite{GuiguesDuquenne1986,CaspardMonjardet2003,BertetDemkoViaudGuerin2018}, and formal concept analysis, where finite closure operators and closed sets are central \cite{GanterWille1999}.  The present manuscript does not use concept lattices, formal contexts, or Galois connections as technical primitives; the comparison is meant only to locate the closure-theoretic background.

The terminology of consequence is Tarskian in spirit: a consequence relation is represented extensionally by a closure operator \cite{Tarski1956,Wojcicki1988}.  In this paper that logical reading is secondary to the closure-theoretic construction.  The atlas step operator is a finite forward propagation procedure for local implications, formally adjacent to Horn-style implications and dependency closure in database theory \cite{Fagin1982,Maier1983}.  The paper does not, however, study implication bases, database dependencies, or algorithmic minimization problems as objects in their own right.  Its use of propagation is purely closure-theoretic: local closures on overlapping finite universes are iterated until a least global closure is reached.

The terms ``conservative extension'' and ``reduct'' are used in a closure-theoretic form analogous to their standard use in logic and model theory: an expansion should not create new old-universe consequences.  The model-theoretic background can be found in standard references such as Hodges \cite{Hodges1993}, but no compactness, interpolation, elementary embeddings, or model-theoretic preservation theorem is used here.  Similarly, the terms ``atlas,'' ``local-to-global,'' and ``gluing'' are used by analogy with familiar local-to-global language in geometry and sheaf theory \cite{MacLaneMoerdijk1992}.  No topology, sheaf condition, or topos-theoretic machinery is assumed.

The paper is also distinct from abstract algebraic logic.  It does not assign algebraic semantics to a given deductive system, and it does not study algebraizability in the sense of \cite{BlokPigozzi1989,FontJansana1996}.  Its objects are finite closure systems, selected families of closed theories, and finite atlases of local closure systems.  The technical contribution is correspondingly narrower: a truth-region representation result for full indexed spaces, a conservative-extension pullback theorem, a finite local-to-global obstruction criterion, and a gluing theorem for compatible local closed theories.  The organization of the paper is summarized in \Cref{tab:paper-roadmap}.

\begin{table}[htbp]
\centering
\renewcommand{\arraystretch}{1.15}
\begin{tabular}{@{}p{0.27\textwidth}p{0.32\textwidth}p{0.31\textwidth}@{}}
\toprule
Layer & Main object & Output \\
\midrule
Indexed truth spaces
& \(V_{\mathfrak X}(\varphi)\)
& region consequence and full representation \\

Decision bookkeeping
& \(P_a,N_a,B_a,U_a\)
& four-region bookkeeping without implicit negation semantics \\

Conservative extension
& reduct map \(\rho\)
& exact pullback of old truth regions \\

Closure atlases
& \(G_{\mathfrak A}\) and \(C_{\mathfrak A}\)
& least global closure generated by local propagation \\

Obstruction theory
& \(\operatorname{Obs}(\mathfrak A)\)
& criterion for conservative globalization \\

Closed-theory gluing
& \(T^\cup\)
& canonical union of compatible local theories \\
\bottomrule
\end{tabular}
\caption{The logical progression of the paper.}
\label{tab:paper-roadmap}
\end{table}

\section{Finite Closure Systems and Indexed Truth Spaces}
\label{sec:finite-closure-systems}

We begin with the single-system case.  The finiteness assumption is not needed for every definition in this section, but it will be used later when the obstruction theorem is converted into an explicit finite algorithm.  The section has two roles: first, to fix the closure consequence relation of a finite closure operator; second, to compare that relation with the weaker information visible through a chosen indexed family of closed sets, called closed theories in the intended logical reading.  This distinction is the base case for every later comparison between local and global closure data.

\begin{definition}[Finite closure system, consequence, and closed theories]
A \emph{finite closure system} is a pair
\[
  \C=(\Sig,\Cl),
\]
where \(\Sig\) is a finite set of sentences and
\[
  \Cl:\Pow(\Sig)\to\Pow(\Sig)
\]
is extensive, monotone, and idempotent; that is, for all \(S,U\subseteq\Sig\),
\begin{enumerate}[label=(\roman*)]
  \item \(S\subseteq\Cl(S)\),
  \item if \(S\subseteq U\), then \(\Cl(S)\subseteq\Cl(U)\),
  \item \(\Cl(\Cl(S))=\Cl(S)\).
\end{enumerate}
For \(S\subseteq\Sig\) and \(\varphi\in\Sig\), write
\[
  S\vdash_{\C}\varphi
  \quad\Longleftrightarrow\quad
  \varphi\in\Cl(S).
\]
Equivalently, one may write \(S\vdash_{\Cl}\varphi\).  When the closure system is clear from context, we simply write \(S\vdash\varphi\).

A subset \(T\subseteq\Sig\) is a \emph{closed theory} of \(\C\) if \(\Cl(T)=T\).  Write
\[
  \Fix(\Cl)=\{T\subseteq\Sig: \Cl(T)=T\}
\]
for the set of all closed theories of \(\C\).
\end{definition}

\begin{remark}[Raw sets and closed theories]
A subset \(S\subseteq\Sig\) is an arbitrary set of sentences; it need not be closed. A closed theory is a stabilized object \(T=\Cl(T)\). Thus \(S\vdash\varphi\) does not mean that \(\varphi\in S\). It means that \(\varphi\) belongs to the closed theory generated by \(S\).
\end{remark}

\begin{definition}[Finite indexed truth space and truth regions]
Let \(\C=(\Sig,\Cl)\) be a finite closure system. A \emph{finite indexed truth space} over \(\C\) is a tuple
\[
  \X=(\Sig,\Cl,I,\tau),
\]
where \(I\) is a finite index set and
\[
  \tau:I\to\Fix(\Cl)
\]
assigns a closed theory to each index.  The elements of \(I\) are called \emph{indices} or, when helpful, \emph{contexts}. In the main examples and representation theorems below, the relevant index sets are nonempty.

For \(\varphi\in\Sig\), the \emph{truth region} of \(\varphi\) in \(\X\) is
\[
  V_{\X}(\varphi)=\{i\in I:\varphi\in\tau(i)\}.
\]
For \(S\subseteq\Sig\), define
\[
  V_{\X}(S)=\{i\in I:S\subseteq\tau(i)\}
  =\bigcap_{\psi\in S}V_{\X}(\psi),
\]
with \(V_{\X}(\varnothing)=I\).  When \(\X\) is clear from context, we may write \(V(\varphi)\) and \(V(S)\).
\end{definition}

\begin{definition}[Region consequence]
For \(S\subseteq\Sig\) and \(\varphi\in\Sig\), write
\[
  S\models_{\X}\varphi
\]
if
\[
  V_{\X}(S)\subseteq V_{\X}(\varphi).
\]
Thus every selected index whose assigned closed theory contains all sentences in \(S\) also contains \(\varphi\). This relation depends on the chosen indexed truth space \(\X\). It should not be confused with closure consequence \(S\vdash_{\C}\varphi\).
\end{definition}

\begin{lemma}[Truth-region soundness]
\label{lem:truth-region-soundness}
Let \(\X=(\Sig,\Cl,I,\tau)\) be a finite indexed truth space over \(\C=(\Sig,\Cl)\). For all \(S\subseteq\Sig\) and \(\varphi\in\Sig\),
\[
  S\vdash_{\C}\varphi
  \quad\Longrightarrow\quad
  V_{\X}(S)\subseteq V_{\X}(\varphi).
\]
Equivalently, \(S\vdash_{\C}\varphi\) implies \(S\models_{\X}\varphi\).
\end{lemma}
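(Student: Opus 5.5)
The plan is to unfold the definitions and use only monotonicity of \(\Cl\) together with the fact that each \(\tau(i)\) is a closed theory. Assume \(S\vdash_{\C}\varphi\), that is, \(\varphi\in\Cl(S)\). To show \(V_{\X}(S)\subseteq V_{\X}(\varphi)\), I will fix an arbitrary index \(i\in V_{\X}(S)\) and show \(i\in V_{\X}(\varphi)\), i.e.\ \(\varphi\in\tau(i)\).

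The steps, in order, are as follows. First, by definition of \(V_{\X}(S)\), membership \(i\in V_{\X}(S)\) means \(S\subseteq\tau(i)\). Second, monotonicity (ii) gives \(\Cl(S)\subseteq\Cl(\tau(i))\). Third, since \(\tau(i)\in\Fix(\Cl)\), we have \(\Cl(\tau(i))=\tau(i)\), hence \(\Cl(S)\subseteq\tau(i)\). Fourth, since \(\varphi\in\Cl(S)\), we get \(\varphi\in\tau(i)\), so \(i\in V_{\X}(\varphi)\). As \(i\) was arbitrary, the inclusion follows. The ``equivalently'' clause is then immediate from the definition of \(S\models_{\X}\varphi\) as exactly this inclusion.

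I will also note that the degenerate cases require no separate treatment. If \(S=\varnothing\), then \(V_{\X}(\varnothing)=I\), and the argument shows that \(\varphi\) lies in every selected closed theory, because \(\Cl(\varnothing)\) is contained in every closed set. If \(I=\varnothing\), the inclusion is vacuous. Extensivity and idempotence are not needed beyond the fixed-point property of each \(\tau(i)\).

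There is no genuine obstacle. The only point requiring care is to invoke the closedness of \(\tau(i)\) rather than of \(S\), since \(S\) is a raw set, as stressed in the preceding remark. Without the requirement that \(\tau\) lands in \(\Fix(\Cl)\), soundness would fail.
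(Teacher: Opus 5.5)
Your proposal is correct and is essentially the paper's proof: fix \(i\in V_{\X}(S)\), use monotonicity together with the closedness of \(\tau(i)\) to get \(\Cl(S)\subseteq\Cl(\tau(i))=\tau(i)\), and conclude \(\varphi\in\tau(i)\). Your remarks on the degenerate cases \(S=\varnothing\) and \(I=\varnothing\), and on which closure axioms are actually used, are also accurate.
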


\begin{proof}
Assume \(S\vdash_{\C}\varphi\), so \(\varphi\in\Cl(S)\). Let \(i\in V_{\X}(S)\). Then \(S\subseteq\tau(i)\). By monotonicity of \(\Cl\),
\[
  \Cl(S)\subseteq\Cl(\tau(i)).
\]
Since \(\tau(i)\) is a closed theory, \(\Cl(\tau(i))=\tau(i)\). Hence \(\Cl(S)\subseteq\tau(i)\). Since \(\varphi\in\Cl(S)\), it follows that \(\varphi\in\tau(i)\), and therefore \(i\in V_{\X}(\varphi)\). This proves the inclusion.
\end{proof}

\begin{remark}[Soundness need not be completeness]
Lemma~\ref{lem:truth-region-soundness} says that closure consequence is always sound for truth-region inclusion. The converse need not hold for an arbitrary indexed truth space: the chosen index set may omit closed theories that distinguish \(S\) from \(\varphi\). The exact converse is recovered when all closed theories are used as indices.
\end{remark}

\begin{definition}[Full indexed truth space]
Let \(\C=(\Sig,\Cl)\) be a finite closure system. The \emph{full indexed truth space} of \(\C\) is
\[
  \F_{\C}=(\Sig,\Cl,\Fix(\Cl),\mathrm{id}),
\]
where the index set is \(\Fix(\Cl)\) and the assignment map is the identity, \(\mathrm{id}(T)=T\). Thus every closed theory is used as an index. Since \(\Sig\) is finite, \(\Fix(\Cl)\) is finite; it is nonempty because \(\Cl(\varnothing)\) is closed.
\end{definition}

\begin{theorem}[Full truth-region representation]
\label{thm:full-truth-region-representation}
Let \(\C=(\Sig,\Cl)\) be a finite closure system, and let \(\F_{\C}\) be its full indexed truth space. For every \(S\subseteq\Sig\) and every \(\varphi\in\Sig\),
\[
  V_{\F_{\C}}(S)\subseteq V_{\F_{\C}}(\varphi)
  \quad\Longleftrightarrow\quad
  S\vdash_{\C}\varphi.
\]
Equivalently,
\[
  S\models_{\F_{\C}}\varphi
  \quad\Longleftrightarrow\quad
  S\vdash_{\C}\varphi.
\]
More generally, for \(S,U\subseteq\Sig\),
\[
  V_{\F_{\C}}(S)\subseteq V_{\F_{\C}}(U)
  \quad\Longleftrightarrow\quad
  U\subseteq\Cl(S).
\]
\end{theorem}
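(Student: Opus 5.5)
I will prove the general set-level equivalence first and then obtain the single-sentence version by taking \(U=\{\varphi\}\). Since \(V_{\F_{\C}}(U)=V_{\F_{\C}}(\{\varphi\})=V_{\F_{\C}}(\varphi)\), the condition \(\{\varphi\}\subseteq\Cl(S)\) becomes exactly \(S\vdash_{\C}\varphi\). The identity \(S\models_{\F_{\C}}\varphi\Leftrightarrow S\vdash_{\C}\varphi\) is then just the definition of region consequence. Throughout, I use that in \(\F_{\C}\) the index set is \(\Fix(\Cl)\) and \(\tau\) is the identity. Hence
\[
  V_{\F_{\C}}(S)=\{T\in\Fix(\Cl):S\subseteq T\}.
\]

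\emph{Direction \((\Leftarrow)\).} Suppose \(U\subseteq\Cl(S)\), and let \(T\in V_{\F_{\C}}(S)\), so that \(T\) is closed and \(S\subseteq T\). I argue exactly as in \Cref{lem:truth-region-soundness}. Monotonicity and closedness give \(\Cl(S)\subseteq\Cl(T)=T\), so \(U\subseteq T\), i.e.\ \(T\in V_{\F_{\C}}(U)\). Equivalently, one may apply \Cref{lem:truth-region-soundness} to each \(\psi\in U\) and intersect the resulting inclusions, using \(V_{\F_{\C}}(U)=\bigcap_{\psi\in U}V_{\F_{\C}}(\psi)\). The degenerate case \(U=\varnothing\) is trivial, because \(V_{\F_{\C}}(\varnothing)\) is the whole index set.

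\emph{Direction \((\Rightarrow)\).} This is the only place where fullness of the index set matters, and it is the step I expect to carry the content. The plan is to use the separating witness \(T_0=\Cl(S)\). By idempotence, \(T_0\in\Fix(\Cl)\), so it is an index of \(\F_{\C}\); this is precisely what could fail in a reduced indexed space. By extensivity, \(S\subseteq T_0\), so \(T_0\in V_{\F_{\C}}(S)\). The hypothesis \(V_{\F_{\C}}(S)\subseteq V_{\F_{\C}}(U)\) then yields \(T_0\in V_{\F_{\C}}(U)\), that is, \(U\subseteq\Cl(S)\). No finiteness is needed for this part; finiteness only ensures that \(\F_{\C}\) is a finite indexed truth space.
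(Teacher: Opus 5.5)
Your proof is correct and uses the same two ingredients as the paper. Soundness comes from monotonicity together with closedness of each index, and the converse comes from the witness \(\Cl(S)\in\Fix(\Cl)\), which lies in \(V_{\F_{\C}}(S)\) by extensiveness. The only difference is the order: the paper proves the single-sentence case first and derives the set case elementwise, whereas you prove the set case directly and specialize to \(U=\{\varphi\}\), which is equally valid.
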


\begin{proof}
We first prove the sentence case. If \(S\vdash_{\C}\varphi\), then Lemma~\ref{lem:truth-region-soundness}, applied to the indexed truth space \(\F_{\C}\), gives
\[
  V_{\F_{\C}}(S)\subseteq V_{\F_{\C}}(\varphi).
\]
Conversely, assume
\[
  V_{\F_{\C}}(S)\subseteq V_{\F_{\C}}(\varphi).
\]
By idempotence, \(\Cl(S)\) is closed, so \(\Cl(S)\in\Fix(\Cl)\). By extensiveness, \(S\subseteq\Cl(S)\), hence \(\Cl(S)\in V_{\F_{\C}}(S)\). The assumed inclusion gives \(\Cl(S)\in V_{\F_{\C}}(\varphi)\). Since the assignment in \(\F_{\C}\) is the identity, this means \(\varphi\in\Cl(S)\), i.e. \(S\vdash_{\C}\varphi\).

For the general statement, first assume \(V_{\F_{\C}}(S)\subseteq V_{\F_{\C}}(U)\). For each \(\psi\in U\), we have \(V_{\F_{\C}}(U)\subseteq V_{\F_{\C}}(\psi)\), hence \(V_{\F_{\C}}(S)\subseteq V_{\F_{\C}}(\psi)\). By the sentence case, \(\psi\in\Cl(S)\). Therefore \(U\subseteq\Cl(S)\).

Conversely, assume \(U\subseteq\Cl(S)\), and let \(T\in V_{\F_{\C}}(S)\). Then \(T\in\Fix(\Cl)\) and \(S\subseteq T\). By monotonicity,
\[
  \Cl(S)\subseteq\Cl(T)=T.
\]
Since \(U\subseteq\Cl(S)\), it follows that \(U\subseteq T\), so \(T\in V_{\F_{\C}}(U)\). Therefore \(V_{\F_{\C}}(S)\subseteq V_{\F_{\C}}(U)\).
\end{proof}

\begin{example}[Full space: a separating closed theory]
\label{ex:p-implies-q-full}
Let
\[
  \Sig=\{p,q\}.
\]
Let \(\Cl\) be the closure operator generated by the single rule \(p\vdash q\). Thus
\[
  \Cl(\varnothing)=\varnothing,
  \qquad
  \Cl(\{q\})=\{q\},
  \qquad
  \Cl(\{p\})=\Cl(\{p,q\})=\{p,q\}.
\]
The closed theories are
\[
  T_0=\varnothing,
  \qquad
  T_1=\{q\},
  \qquad
  T_2=\{p,q\}.
\]
In the full indexed truth space these three closed theories are all used as indices, and the truth regions are
\[
  V(p)=\{T_2\},
  \qquad
  V(q)=\{T_1,T_2\}.
\]
Hence
\[
  V(p)\subsetneq V(q),
\]
which represents the valid closure consequence \(p\vdash q\). The converse containment fails because
\[
  T_1\in V(q)\setminus V(p).
\]
This single closed theory is the separating witness for \(q\nvdash p\). The example is therefore the smallest useful model of the full representation theorem: entailment is region inclusion, and non-entailment is witnessed by a closed theory in the full index set.
\end{example}

\begin{figure}[!htbp]
\centering
\begin{tikzpicture}[
  theory/.style={draw, rounded corners, minimum width=2.1cm, minimum height=8mm, align=center},
  region/.style={draw, rounded corners},
  every node/.style={font=\small}
]
\node[theory] (T0) at (0,0) {$T_0=\varnothing$};
\node[theory] (T1) at (3,0) {$T_1=\{q\}$};
\node[theory] (T2) at (6,0) {$T_2=\{p,q\}$};

\node[region, fit=(T2), inner sep=9pt] (Vp) {};
\node[region, fit=(T1)(T2), inner xsep=16pt, inner ysep=16pt] (Vq) {};

\node[font=\footnotesize, above=2pt of Vq.north west, anchor=south west] {$V(q)$};
\node[font=\footnotesize, above=2pt of Vp.north east, anchor=south east] {$V(p)$};

\node[font=\footnotesize] at (3,-1.62)
  {$V(p)\subsetneq V(q)$, with $T_1\in V(q)\setminus V(p)$.};
\end{tikzpicture}
\caption{Truth regions for the closure rule \(p\vdash q\).}
\label{fig:full-truth-region-pq}
\end{figure}

\begin{counterexample}[Reduced space and spurious region consequence]
\label{cex:reduced-index-spurious-region-consequence}
Use the same closure system as in Example~\ref{ex:p-implies-q-full}, but omit the separating closed theory \(T_1=\{q\}\).  Thus the reduced indexed truth space has
\[
  I'=\{T_0,T_2\}=\{\varnothing,\{p,q\}\}.
\]
Inside this reduced space,
\[
  V_{I'}(p)=\{T_2\}=V_{I'}(q).
\]
Consequently
\[
  \{q\}\models_{I'}p.
\]
This is a \emph{spurious region consequence}: it is true as a statement about the reduced index family, but it is not a closure consequence of the underlying system, since
\[
  \{q\}\nvdash p.
\]
The reason is exactly visible: the omitted closed theory \(T_1=\{q\}\) is the witness that \(q\) can hold without \(p\).  Thus reduced indexed spaces can collapse distinct truth regions and create false local entailments relative to the original closure operator.
\end{counterexample}

\begin{remark}[Transition to decision data]
This section studies positive truth regions only.  It does not yet compare a sentence with a designated opposite, nor does it impose consistency or inconsistency conditions on indexed theories.  The next section adds exactly this bookkeeping layer.  The result remains finite and structural: truth regions are still subsets of the chosen index set, not open sets in an assumed topology.
\end{remark}

\section{Decision Decompositions}
\label{sec:decision-decompositions}

The representation theorem above concerns positive occurrence of sentences in indexed closed theories.  To discuss decision status one must choose, in addition, which sentence is being treated as the opposite of a given sentence.  This section adds only that bookkeeping layer.  It is auxiliary to the later atlas theorem: the proof of the local-to-global obstruction criterion uses closure propagation and conservative restriction, not decision signatures or decision distances.  The point of the section is more limited and more precise: truth-region language should not silently import a semantics of negation.

Accordingly, the notation below is intentionally weak.  A designated symbol \(\bar a\) is simply another sentence whose truth region is compared with the truth region of \(a\).  Any exclusivity, inconsistency, explosion, or classical behavior must come either from the closure operator itself or from an explicitly imposed condition.

\begin{definition}[Formal opposites and four-region decision decomposition]
\label{def:four-region-decomposition}
Let \(\X=(\Sig,\Cl,I,\tau)\) be a finite indexed truth space.  For a chosen sentence \(a\in\Sig\), a \emph{formal opposite} of \(a\) is a designated sentence \(\bar a\in\Sig\).  No inference rule relating \(a\) and \(\bar a\) is assumed unless it is explicitly present in the closure operator or imposed as an additional condition.

Given such a pair \((a,\bar a)\), define
\[
  P_a=V_{\X}(a)\setminus V_{\X}(\bar a),
  \qquad
  N_a=V_{\X}(\bar a)\setminus V_{\X}(a),
\]
\[
  B_a=V_{\X}(a)\cap V_{\X}(\bar a),
  \qquad
  U_a=I\setminus\bigl(V_{\X}(a)\cup V_{\X}(\bar a)\bigr).
\]
The tuple
\[
  \mathcal D_4^{\X}(a)=(P_a,N_a,B_a,U_a)
\]
is the \emph{four-region decision decomposition} of \(a\) in \(\X\).  The four regions record, respectively, the contexts containing \(a\) only, containing \(\bar a\) only, containing both, and containing neither.  When \(\X\) is clear from context, we write simply \(\mathcal D_4(a)\).
\end{definition}

\begin{definition}[Local opposite-separation]
\label{def:opposite-separation}
The indexed truth space \(\X\) is \emph{opposite-separated at \(a\)} if
\[
  B_a=\varnothing.
\]
Equivalently, no indexed closed theory contains both \(a\) and \(\bar a\).  For a family of formal-opposite pairs, \(\X\) is \emph{opposite-separated} if this condition holds for every pair under consideration.
\end{definition}

\begin{remark}[Partition and the opposite-separated reduction]
The definitions give the disjoint partition
\[
I=P_a\sqcup N_a\sqcup B_a\sqcup U_a,
\]
and
\[
V_{\mathfrak X}(a)=P_a\sqcup B_a,
\qquad
V_{\mathfrak X}(\bar a)=N_a\sqcup B_a.
\]

\Needspace{8\baselineskip}
The four-region decomposition can be displayed as the following membership table:
\begin{center}
\renewcommand{\arraystretch}{1.18}
\begin{tabular}{@{}c|cc@{}}
 & $\bar a\in\tau(i)$ & $\bar a\notin\tau(i)$ \\
\midrule
$a\in\tau(i)$
  & $\begin{array}{c} B_a \\[-2pt] \scriptstyle a\text{ and }\bar a \end{array}$
  & $\begin{array}{c} P_a \\[-2pt] \scriptstyle a\text{ only} \end{array}$ \\
$a\notin\tau(i)$
  & $\begin{array}{c} N_a \\[-2pt] \scriptstyle \bar a\text{ only} \end{array}$
  & $\begin{array}{c} U_a \\[-2pt] \scriptstyle \text{neither} \end{array}$ \\
\end{tabular}
\end{center}

When \(\mathfrak X\) is opposite-separated at \(a\), the region \(B_a\) is empty,
so the decomposition reduces to the usual positive--negative--undecided
trichotomy.
\end{remark}

\begin{definition}[Optional finite summaries]
\label{def:decision-summaries}
The \emph{four-valued decision state function} of \(a\) is
\[
  \delta_a^{(4)}:I\to\{+,-,\pm,0\},
  \qquad
  \delta_a^{(4)}(i)=
  \begin{cases}
  + & i\in P_a,\\
  - & i\in N_a,\\
  \pm & i\in B_a,\\
  0 & i\in U_a.
  \end{cases}
\]
The symbol \(\pm\) records simultaneous occurrence of \(a\) and \(\bar a\) in the assigned closed theory.  It does not specify any paraconsistent or explosive inference rule.

The \emph{four-region decision signature} of \(a\) is
\[
  \sigma_4(a)=\bigl(|P_a|,|N_a|,|B_a|,|U_a|\bigr).
\]
If \(I\neq\varnothing\), its normalized version is
\[
  \bar\sigma_4(a)=
  \left(
  \frac{|P_a|}{|I|},
  \frac{|N_a|}{|I|},
  \frac{|B_a|}{|I|},
  \frac{|U_a|}{|I|}
  \right).
\]
For selected sentences \(a,b\in\Sig\) with designated formal opposites \(\bar a,\bar b\), define
\[
  d_4(a,b)=\bigl|\{i\in I: \delta_a^{(4)}(i)\neq\delta_b^{(4)}(i)\}\bigr|.
\]
These are finite summaries of \(\mathcal D_4\), not replacements for the underlying regions.  They are not used in the obstruction theorem.
\end{definition}

\begin{remark}[Pseudometric status of \(d_4\)]
The quantity \(d_4\) is the ordinary Hamming distance between the state functions \(\delta_a^{(4)}\) and \(\delta_b^{(4)}\), pulled back along the assignment \(a\mapsto\delta_a^{(4)}\).  Hence it is a pseudometric on the selected sentences equipped with formal opposites.  It may vanish on distinct sentences, and
\[
  d_4(a,b)=0
  \quad\Longleftrightarrow\quad
  \delta_a^{(4)}=\delta_b^{(4)}
  \quad\Longleftrightarrow\quad
  \mathcal D_4(a)=\mathcal D_4(b).
\]
This is a bookkeeping fact, not a structural input to the atlas construction.
\end{remark}

\begin{example}[A nonempty both-region]
\label{ex:nonempty-both-region}
Let \(\Sig=\{a,\bar a\}\), and let \(\Cl(S)=S\) be the identity closure.  The full indexed truth space has one index for each subset of \(\Sig\):
\[
  I=\Pow(\Sig)=\{\varnothing,\{a\},\{\bar a\},\{a,\bar a\}\}.
\]
The truth regions are
\[
  V(a)=\{\{a\},\{a,\bar a\}\},
  \qquad
  V(\bar a)=\{\{\bar a\},\{a,\bar a\}\}.
\]
Therefore
\[
  P_a=\{\{a\}\},\quad
  N_a=\{\{\bar a\}\},\quad
  B_a=\{\{a,\bar a\}\},\quad
  U_a=\{\varnothing\}.
\]
The both-region is nonempty because neither the closure system nor the index family prevents simultaneous occurrence of the two designated sentences.
\end{example}

\begin{example}[Signatures do not determine decompositions]
\label{ex:signature-not-complete}
Let \(\Sig=\{a,\bar a,b,\bar b\}\) with identity closure, and take two indices \(i_1,i_2\) with
\[
  \tau(i_1)=\{a,\bar b\},
  \qquad
  \tau(i_2)=\{\bar a,b\}.
\]
For \(a\), one has \(P_a=\{i_1\}\) and \(N_a=\{i_2\}\).  For \(b\), one has \(P_b=\{i_2\}\) and \(N_b=\{i_1\}\).  In both cases the both-region and undecided region are empty, so
\[
  \sigma_4(a)=\sigma_4(b)=(1,1,0,0).
\]
Nevertheless the two state functions disagree at both indices, and therefore \(d_4(a,b)=2\).  Thus the signature records only the sizes of the four regions; it does not determine their positions in the common index set.
\end{example}

\begin{remark}[Role of decision decompositions]
The essential content of this section is the four-region partition.  The optional summaries above can be useful when comparing finite region-level behavior inside a fixed indexed truth space, but they are secondary to the actual regions and play no role in the proof of the local-to-global obstruction criterion.  With this bookkeeping separated from closure propagation, the next section turns to conservative comparison between finite closure systems.
\end{remark}

\section{Conservative Extensions and Truth-Region Pullback}
\label{sec:conservative-closure-extensions}

The previous sections treated the internal structure of a single indexed truth space.  We now record the comparison principle needed before several local systems can be assembled into an atlas.  The intended situation is that a larger language may introduce new sentences and new consequences involving those sentences, while leaving the consequences expressible in an older language unchanged.

This is the role of conservative extension.  In full truth spaces, conservativity has a precise truth-region form: old truth regions in the larger system are exact inverse images of old truth regions in the smaller system under the reduct map.  This one-extension result will become the chart-by-chart requirement in the atlas theorem.

\begin{definition}[Conservative extension of finite closure systems]
\label{def:conservative-closure-extension}
Let
\[
  \C_X=(\Sig_X,\Cl_X),
  \qquad
  \C_Y=(\Sig_Y,\Cl_Y)
\]
be finite closure systems with \(\Sig_X\subseteq\Sig_Y\). We say that \(\C_Y\) is a \emph{conservative extension} of \(\C_X\) if, for every \(S\subseteq\Sig_X\),
\[
  \Cl_Y(S)\cap\Sig_X=\Cl_X(S).
\]
Thus closing an old-language set inside the larger system and then restricting back to the old language gives exactly the same old consequences as closing it inside the old system.
\end{definition}

\begin{lemma}[Preservation of old consequence]
\label{lem:conservative-extension-preserves-old-consequence}
If \(\C_Y\) is a conservative extension of \(\C_X\), then for every \(S\subseteq\Sig_X\) and every \(\varphi\in\Sig_X\),
\[
  S\vdash_{\C_X}\varphi
  \quad\Longleftrightarrow\quad
  S\vdash_{\C_Y}\varphi.
\]
\end{lemma}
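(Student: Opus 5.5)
The plan is to unfold both sides of the equivalence into membership statements and then apply the conservativity identity from \Cref{def:conservative-closure-extension} directly. By definition, \(S\vdash_{\C_X}\varphi\) means \(\varphi\in\Cl_X(S)\), and \(S\vdash_{\C_Y}\varphi\) means \(\varphi\in\Cl_Y(S)\). Both expressions make sense: \(S\subseteq\Sig_X\subseteq\Sig_Y\), so \(S\) is a legitimate argument of \(\Cl_Y\) as well as of \(\Cl_X\).

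The key step uses the hypothesis \(\varphi\in\Sig_X\). Under this hypothesis, \(\varphi\in\Cl_Y(S)\) holds if and only if \(\varphi\in\Cl_Y(S)\cap\Sig_X\). Conservativity gives
\[
  \Cl_Y(S)\cap\Sig_X=\Cl_X(S),
\]
so this last condition is equivalent to \(\varphi\in\Cl_X(S)\). Chaining these equivalences yields both directions at once.

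No real obstacle is expected. The only point to flag is that the assumption \(\varphi\in\Sig_X\) is exactly what allows intersecting with \(\Sig_X\) without losing information. Without it, sentences in \(\Sig_Y\setminus\Sig_X\) could belong to \(\Cl_Y(S)\), and the statement would not even be well-typed on the \(\C_X\) side. I would record this remark and then note that the lemma is the sentence-level form of the set-level identity in the definition.
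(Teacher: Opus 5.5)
Your proposal is correct and follows the paper's proof essentially verbatim. Both unfold \(S\vdash\varphi\) as membership in the respective closure, then use \(\varphi\in\Sig_X\) together with the conservativity identity \(\Cl_Y(S)\cap\Sig_X=\Cl_X(S)\) to get both directions of the equivalence at once.
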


\begin{proof}
By definition,
\[
  S\vdash_{\C_X}\varphi
  \quad\Longleftrightarrow\quad
  \varphi\in\Cl_X(S).
\]
Since the extension is conservative,
\[
  \Cl_X(S)=\Cl_Y(S)\cap\Sig_X.
\]
Because \(\varphi\in\Sig_X\), membership in \(\Cl_X(S)\) is equivalent to membership in \(\Cl_Y(S)\). This is exactly the asserted equivalence.
\end{proof}

For \(T\subseteq\Sig_Y\), write
\[
  T\restr{X}:=T\cap\Sig_X
\]
for its old-language reduct.  This is only set-theoretic restriction; the following proposition explains why it sends closed theories to closed theories under conservativity.

\begin{lemma}[Reducts of closed theories are closed]
\label{lem:reducts-of-closed-theories-are-closed}
Assume that \(\C_Y\) is a conservative extension of \(\C_X\). If
\[
  T\in\Fix(\Cl_Y),
\]
then
\[
  T\cap\Sig_X\in\Fix(\Cl_X).
\]
\end{lemma}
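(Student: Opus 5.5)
The plan is to show directly that \(R:=T\cap\Sig_X\) satisfies \(\Cl_X(R)=R\). Extensiveness of \(\Cl_X\) already gives \(R\subseteq\Cl_X(R)\), so only the reverse inclusion \(\Cl_X(R)\subseteq R\) needs proof.

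First, since \(R\subseteq\Sig_X\), conservativity (\Cref{def:conservative-closure-extension}) applies to \(R\) and gives
\[
  \Cl_X(R)=\Cl_Y(R)\cap\Sig_X.
\]
Second, \(R\subseteq T\), so monotonicity of \(\Cl_Y\) together with \(T\in\Fix(\Cl_Y)\) yields
\[
  \Cl_Y(R)\subseteq\Cl_Y(T)=T.
\]
Intersecting both sides with \(\Sig_X\) then gives
\[
  \Cl_X(R)=\Cl_Y(R)\cap\Sig_X\subseteq T\cap\Sig_X=R,
\]
which is the required inclusion, and hence \(R\in\Fix(\Cl_X)\).

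There is no real obstacle here. The one point to watch is that conservativity may only be applied to subsets of \(\Sig_X\). This is why the argument closes the reduct \(R\) rather than \(T\) itself, and why the restriction to \(\Sig_X\) is taken after using the closedness of \(T\) in the larger system. The same pattern, closing an old-language set in \(\C_Y\) and then restricting, is what I would expect to reuse for the later truth-region pullback statement along the reduct map.
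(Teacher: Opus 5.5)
Your proof is correct and matches the paper's argument step for step. You apply conservativity to the reduct \(T\cap\Sig_X\), bound \(\Cl_Y(T\cap\Sig_X)\subseteq\Cl_Y(T)=T\) by monotonicity and closedness, intersect with \(\Sig_X\), and obtain the remaining inclusion from extensiveness of \(\Cl_X\).
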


\begin{proof}
Let \(T_X=T\cap\Sig_X\). We show that \(\Cl_X(T_X)=T_X\). By conservativity,
\[
  \Cl_X(T_X)=\Cl_Y(T_X)\cap\Sig_X.
\]
Since \(T_X\subseteq T\), monotonicity gives
\[
  \Cl_Y(T_X)\subseteq\Cl_Y(T)=T,
\]
because \(T\) is \(\Cl_Y\)-closed. Therefore
\[
  \Cl_X(T_X)=\Cl_Y(T_X)\cap\Sig_X\subseteq T\cap\Sig_X=T_X.
\]
The reverse inclusion follows from extensiveness of \(\Cl_X\). Hence \(\Cl_X(T_X)=T_X\), so \(T_X\) is \(\Cl_X\)-closed.
\end{proof}

\begin{definition}[Reduct map on full truth spaces]
\label{def:reduct-map-full-truth-spaces}
Let
\[
  \F_X=(\Sig_X,\Cl_X,\Fix(\Cl_X),\mathrm{id}),
  \qquad
  \F_Y=(\Sig_Y,\Cl_Y,\Fix(\Cl_Y),\mathrm{id})
\]
be the full indexed truth spaces of \(\C_X\) and \(\C_Y\). If \(\C_Y\) is conservative over \(\C_X\), define
\[
  \rho:\Fix(\Cl_Y)\to\Fix(\Cl_X),
  \qquad
  \rho(T)=T\cap\Sig_X.
\]
This map is well-defined by Lemma~\ref{lem:reducts-of-closed-theories-are-closed}.
\end{definition}

\begin{proposition}[The reduct map is surjective]
\label{prop:reduct-map-surjective}
If \(\C_Y\) is a conservative extension of \(\C_X\), then the reduct map
\[
  \rho:\Fix(\Cl_Y)\to\Fix(\Cl_X)
\]
is surjective.
\end{proposition}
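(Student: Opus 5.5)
The plan is to exhibit, for each closed theory of the old system, an explicit preimage under \(\rho\), namely its closure in the larger system. Fix \(T_X\in\Fix(\Cl_X)\) and set
\[
  T:=\Cl_Y(T_X)\subseteq\Sig_Y .
\]
By idempotence of \(\Cl_Y\), the set \(T\) is \(\Cl_Y\)-closed, so \(T\in\Fix(\Cl_Y)\) and \(\rho(T)\) is defined (Definition~\ref{def:reduct-map-full-truth-spaces}, using Lemma~\ref{lem:reducts-of-closed-theories-are-closed}).

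It then remains to compute \(\rho(T)=\Cl_Y(T_X)\cap\Sig_X\). Since \(T_X\subseteq\Sig_X\), the conservativity condition of Definition~\ref{def:conservative-closure-extension} applies with \(S=T_X\) and gives
\[
  \Cl_Y(T_X)\cap\Sig_X=\Cl_X(T_X).
\]
Because \(T_X\) is \(\Cl_X\)-closed, the right-hand side equals \(T_X\). Hence \(\rho(T)=T_X\), and \(\rho\) is surjective.

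There is no real obstacle. The only point needing care is to choose the preimage as the \(\Cl_Y\)-closure of \(T_X\) rather than \(T_X\) itself. The set \(T_X\) need not be \(\Cl_Y\)-closed, since it may generate new-language consequences, so it need not lie in \(\Fix(\Cl_Y)\). Conservativity is used exactly once, to ensure that closing in \(\Sig_Y\) adds nothing back inside \(\Sig_X\). As a side remark, the argument shows that \(\Cl_Y(T_X)\) is the least element of the fiber \(\rho^{-1}(T_X)\): any \(T'\) in that fiber contains \(T_X\) and is \(\Cl_Y\)-closed, so by monotonicity it contains \(\Cl_Y(T_X)\).
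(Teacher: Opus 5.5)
Your proof is correct and is exactly the paper's argument: the paper also takes \(\Cl_Y(C)\) as the preimage of \(C\in\Fix(\Cl_X)\), notes that it is closed by idempotence, and applies conservativity with \(\Cl_X(C)=C\) to get \(\rho(\Cl_Y(C))=C\). Your side remark is also correct but not needed: \(\Cl_Y(T_X)\) is the least element of the fiber \(\rho^{-1}(T_X)\).
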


\begin{proof}
Let \(C\in\Fix(\Cl_X)\). Define
\[
  \widehat C=\Cl_Y(C).
\]
Then \(\widehat C\in\Fix(\Cl_Y)\) by idempotence. Its reduct is
\[
  \rho(\widehat C)=\Cl_Y(C)\cap\Sig_X.
\]
By conservativity,
\[
  \Cl_Y(C)\cap\Sig_X=\Cl_X(C).
\]
Since \(C\) is \(\Cl_X\)-closed, \(\Cl_X(C)=C\). Hence \(\rho(\widehat C)=C\). Therefore every old closed theory is the reduct of some new closed theory.
\end{proof}

\begin{theorem}[Truth-region pullback under conservative extension]
\label{thm:old-truth-regions-pull-back-conservative-extension}
Assume that \(\C_Y\) is a conservative extension of \(\C_X\), and let
\[
  \rho:\Fix(\Cl_Y)\to\Fix(\Cl_X)
\]
be the reduct map. For every old-language sentence \(\varphi\in\Sig_X\),
\[
  V_{\F_Y}(\varphi)=\rho^{-1}\bigl(V_{\F_X}(\varphi)\bigr).
\]
More generally, for every \(S\subseteq\Sig_X\),
\[
  V_{\F_Y}(S)=\rho^{-1}\bigl(V_{\F_X}(S)\bigr).
\]
\end{theorem}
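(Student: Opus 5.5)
The plan is to unwind both sides membershipwise; the statement then reduces to the fact that \(\varphi\in\Sig_X\). Fix \(T\in\Fix(\Cl_Y)\). By definition of the full truth space \(\F_Y\) (index set \(\Fix(\Cl_Y)\), identity assignment), \(T\in V_{\F_Y}(\varphi)\) holds exactly when \(\varphi\in T\). On the other side, \(T\in\rho^{-1}\bigl(V_{\F_X}(\varphi)\bigr)\) holds exactly when \(\rho(T)\in V_{\F_X}(\varphi)\). By the identity assignment in \(\F_X\), this means \(\varphi\in\rho(T)=T\cap\Sig_X\).

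The only ingredient needed beyond the definitions is that \(\rho\) is a well-defined map into \(\Fix(\Cl_X)\). This is Lemma~\ref{lem:reducts-of-closed-theories-are-closed}, and it is where conservativity enters. It guarantees that \(\rho(T)\) is a legitimate index of \(\F_X\), so the preimage expression makes sense. Once that is granted, since \(\varphi\in\Sig_X\) we have \(\varphi\in T\) if and only if \(\varphi\in T\cap\Sig_X\). The two membership conditions therefore coincide, and this gives the sentence case.

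For a set \(S\subseteq\Sig_X\), I would argue the same way directly. We have \(T\in V_{\F_Y}(S)\) if and only if \(S\subseteq T\). Because \(S\subseteq\Sig_X\), this holds if and only if \(S\subseteq T\cap\Sig_X=\rho(T)\), which holds if and only if \(\rho(T)\in V_{\F_X}(S)\). Alternatively, one can take the intersection over \(\psi\in S\) of the sentence case, using that preimages commute with intersections. The empty case \(S=\varnothing\) is consistent with this: both sides equal the whole index set \(\Fix(\Cl_Y)\), since \(\rho^{-1}(\Fix(\Cl_X))=\Fix(\Cl_Y)\).

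I do not expect a genuine obstacle here. The substantive work was already done in Lemma~\ref{lem:reducts-of-closed-theories-are-closed}, and surjectivity (Proposition~\ref{prop:reduct-map-surjective}) is not needed for this identity, though I would remark that it shows every old region is actually realized.
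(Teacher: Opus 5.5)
Your proposal is correct and matches the paper's proof: fix \(T\in\Fix(\Cl_Y)\), unwind both memberships through the identity assignments, and use \(\varphi\in\Sig_X\) (respectively \(S\subseteq\Sig_X\)) to pass between \(T\) and \(T\cap\Sig_X=\rho(T)\). Your side remarks are also accurate: conservativity enters only through the well-definedness of \(\rho\) (Lemma~\ref{lem:reducts-of-closed-theories-are-closed}), and surjectivity is not needed for the identity.
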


\begin{proof}
Let \(T\in\Fix(\Cl_Y)\). First let \(\varphi\in\Sig_X\). Then
\[
  T\in V_{\F_Y}(\varphi)
  \quad\Longleftrightarrow\quad
  \varphi\in T.
\]
Since \(\varphi\in\Sig_X\), this is equivalent to
\[
  \varphi\in T\cap\Sig_X=\rho(T).
\]
Equivalently,
\[
  T\in V_{\F_Y}(\varphi)
  \quad\Longleftrightarrow\quad
  \rho(T)\in V_{\F_X}(\varphi).
\]
This proves the sentence case.

For \(S\subseteq\Sig_X\), the same argument gives
\[
  T\in V_{\F_Y}(S)
  \quad\Longleftrightarrow\quad
  S\subseteq T
  \quad\Longleftrightarrow\quad
  S\subseteq T\cap\Sig_X
  \quad\Longleftrightarrow\quad
  \rho(T)\in V_{\F_X}(S).
\]
Thus \(V_{\F_Y}(S)=\rho^{-1}(V_{\F_X}(S))\).
\end{proof}

\begin{corollary}[Old full region consequence is preserved]
\label{cor:old-full-region-consequence-preserved}
Assume that \(\C_Y\) is a conservative extension of \(\C_X\). For old-language \(S\subseteq\Sig_X\) and \(\varphi\in\Sig_X\),
\[
  V_{\F_X}(S)\subseteq V_{\F_X}(\varphi)
  \quad\Longleftrightarrow\quad
  V_{\F_Y}(S)\subseteq V_{\F_Y}(\varphi).
\]
\end{corollary}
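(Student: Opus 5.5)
The plan is to chain three equivalences, each supplied by an earlier result. Fix an old-language set \(S\subseteq\Sig_X\) and a sentence \(\varphi\in\Sig_X\).

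First, apply \Cref{thm:full-truth-region-representation} to the full space \(\F_X\):
\[
  V_{\F_X}(S)\subseteq V_{\F_X}(\varphi)
  \quad\Longleftrightarrow\quad
  S\vdash_{\C_X}\varphi .
\]
Second, \Cref{lem:conservative-extension-preserves-old-consequence} converts this into \(S\vdash_{\C_Y}\varphi\). That lemma needs exactly \(S\subseteq\Sig_X\) and \(\varphi\in\Sig_X\), which is why the corollary is restricted to old-language data. Third, apply \Cref{thm:full-truth-region-representation} again, now to \(\C_Y\) and \(\F_Y\). This is legitimate because \(S\subseteq\Sig_Y\) and \(\varphi\in\Sig_Y\). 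It gives \(S\vdash_{\C_Y}\varphi\) if and only if \(V_{\F_Y}(S)\subseteq V_{\F_Y}(\varphi)\). Composing the three equivalences proves the corollary.

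As a cross-check, one could instead argue directly with \Cref{thm:old-truth-regions-pull-back-conservative-extension} and \Cref{prop:reduct-map-surjective}. The pullback theorem rewrites both \(F_Y\)-regions as \(\rho^{-1}\) of the corresponding \(F_X\)-regions. Taking preimages under any map preserves inclusion. For a surjective map, preimages also reflect inclusion, since \(\rho(\rho^{-1}(A))=A\). So \(\rho^{-1}(V_{\F_X}(S))\subseteq\rho^{-1}(V_{\F_X}(\varphi))\) holds exactly when \(V_{\F_X}(S)\subseteq V_{\F_X}(\varphi)\).

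There is no real obstacle. The only point needing care is that the reverse direction relies on surjectivity, or equivalently on the representation theorem. I will present the first route as the proof and mention the second as a remark.
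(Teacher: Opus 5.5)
Your main argument is correct and matches the paper's proof exactly: apply Theorem~\ref{thm:full-truth-region-representation} on both sides and link the two consequence relations with Lemma~\ref{lem:conservative-extension-preserves-old-consequence}. Your alternative route is also valid: Theorem~\ref{thm:old-truth-regions-pull-back-conservative-extension} gives both \(\F_Y\)-regions as preimages under \(\rho\), and since \(\rho\) is surjective (Proposition~\ref{prop:reduct-map-surjective}), taking preimages both preserves and reflects inclusion.
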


\begin{proof}
By Theorem~\ref{thm:full-truth-region-representation}, the first truth-region inclusion is equivalent to \(S\vdash_{\C_X}\varphi\), and the second is equivalent to \(S\vdash_{\C_Y}\varphi\). These two consequence statements are equivalent by Lemma~\ref{lem:conservative-extension-preserves-old-consequence}.
\end{proof}

\begin{example}[Conservative extension adding only a new-language consequence]
\label{ex:conservative-extension-new-language-consequence}
Let \(\Sig_X=\{p\}\), with identity closure \(\Cl_X(S)=S\).  Let \(\Sig_Y=\{p,q\}\), and define \(\Cl_Y\) by the single rule \(p\vdash_Y q\).  Thus
\[
  \Cl_Y(\varnothing)=\varnothing,
  \qquad
  \Cl_Y(\{q\})=\{q\},
  \qquad
  \Cl_Y(\{p\})=\Cl_Y(\{p,q\})=\{p,q\}.
\]
For every \(S\subseteq\Sig_X\), closing in the larger system and restricting back to \(\Sig_X\) gives
\[
  \Cl_Y(S)\cap\Sig_X=S=\Cl_X(S).
\]
Hence \(\C_Y\) is conservative over \(\C_X\), even though it adds the new-language consequence \(p\vdash_Y q\).

The full closed theories of \(\C_X\) are \(\varnothing\) and \(\{p\}\).  The full closed theories of \(\C_Y\) are
\[
  \varnothing,
  \qquad
  \{q\},
  \qquad
  \{p,q\}.
\]
The reduct map is
\[
  \rho(\varnothing)=\varnothing,
  \qquad
  \rho(\{q\})=\varnothing,
  \qquad
  \rho(\{p,q\})=\{p\}.
\]
\Needspace{7\baselineskip}
Equivalently, its fibers are summarized by
\begin{center}
\small
\renewcommand{\arraystretch}{1.12}
\begin{tabular}{@{}c c c@{}}
\toprule
$T\in\Fix(\Cl_Y)$ & & $\rho(T)\in\Fix(\Cl_X)$ \\
\midrule
$\varnothing$ & $\mapsto$ & $\varnothing$ \\
$\{q\}$ & $\mapsto$ & $\varnothing$ \\
$\{p,q\}$ & $\mapsto$ & $\{p\}$ \\
\bottomrule
\end{tabular}
\end{center}
For the old sentence \(p\),
\[
  V_{\F_X}(p)=\{\{p\}\},
  \qquad
  V_{\F_Y}(p)=\{\{p,q\}\}.
\]
Thus
\[
  V_{\F_Y}(p)=\rho^{-1}\bigl(V_{\F_X}(p)\bigr).
\]
This illustrates the point of conservativity: new sentences and new mixed-language consequences may appear, but old-language truth regions pull back exactly.
\end{example}
\begin{counterexample}[A nonconservative extension creates an old consequence]
\label{cex:nonconservative-extension-creates-old-consequence}
Let \(\Sig_X=\{p,q\}\) with identity closure, and let \(\Sig_Y=\Sig_X\) with closure generated by the rule \(p\vdash_Y q\).  Then
\[
  \Cl_X(\{p\})=\{p\},
  \qquad
  \Cl_Y(\{p\})\cap\Sig_X=\{p,q\}.
\]
So \(\C_Y\) is not conservative over \(\C_X\).  The old-language consequence
\[
  p\vdash_Y q
\]
has been created by the extension, although
\[
  p\nvdash_X q.
\]
Equivalently, the extension has created the old truth-region containment corresponding to \(p\vdash q\).  This is precisely the behavior excluded by the pullback theorem.
\end{counterexample}

\begin{remark}[Cardinality is not preserved by pullback]
The equality
\[
  V_{\F_Y}(\varphi)=\rho^{-1}(V_{\F_X}(\varphi))
\]
preserves the old truth region structurally, but it need not preserve cardinal summaries. The reduct map may have fibers of different sizes. In Example~\ref{ex:conservative-extension-new-language-consequence},
\[
  \rho^{-1}(\varnothing)=\{\varnothing,\{q\}\}
  \quad\text{but}\quad
  \rho^{-1}(\{p\})=\{\{p,q\}\}.
\]
Thus cardinality-based summaries, including decision signatures, depend on the ambient indexed space or on an explicit measure on it.
\end{remark}

\begin{remark}[Scope of the comparison theory]
General refinement maps between selected indexed families are useful for comparing chosen subspaces, but the present manuscript only needs the closure-level result above: conservative extension preserves old consequence and induces exact pullback on old truth regions in full truth spaces.  The next section replaces the single inclusion \(\Sig_X\subseteq\Sig_Y\) by several overlapping chart languages.  Conservative realization will then mean that every chart is preserved in the same sense simultaneously.
\end{remark}

\section{Closure Atlases and Atlas-Generated Closure}
\label{sec:closure-atlases}

We now pass from a single conservative extension to a finite family of local closure systems.  The purpose is to study when these local consequence relations can be regarded as restrictions of one global consequence relation.  The local systems may have overlapping sentence universes, and their interactions can force consequences not visible from any one chart alone.  This section constructs the canonical global closure generated by such interactions; the next section asks when that canonical closure is conservative over each chart.

\begin{definition}[Finite closure atlas]
A \emph{finite closure atlas} is a finite family of finite closure systems
\[
  \A=\{(\Sig_\alpha,\Cl_\alpha)\}_{\alpha\in A}.
\]
The finite set \(A\) is the \emph{chart index set}. The \emph{global sentence universe} of the atlas is
\[
  \Sig_{\A}=\bigcup_{\alpha\in A}\Sig_\alpha.
\]
When the atlas is clear from context, we write simply \(\Sig\) for \(\Sig_{\A}\). For charts \(\alpha,\beta\in A\), their \emph{overlap language} is
\[
  \Sig_{\alpha\beta}=\Sig_\alpha\cap\Sig_\beta.
\]
\end{definition}

\begin{remark}[No topology or prior compatibility is assumed]
The word ``atlas'' is used only to indicate a finite family of local closure systems with possibly overlapping languages. No topological structure is assumed. Moreover, the definition does not require any prior agreement on overlaps. If two charts disagree on an overlap, or if several charts jointly force a consequence that one chart denies, this will be detected later by the obstruction set.
\end{remark}

\begin{definition}[Atlas step operator and atlas-generated closure]
For \(X\subseteq\Sig_{\A}\), define the \emph{atlas step operator}
\[
  \GA:\Pow(\Sig_{\A})\to\Pow(\Sig_{\A})
\]
by
\[
  \GA(X)=X\cup\bigcup_{\alpha\in A}\Cl_\alpha(X\cap\Sig_\alpha).
\]
Thus \(\GA(X)\) is obtained by allowing each chart \(\alpha\) to close the part of \(X\) visible in its own sentence universe \(\Sig_\alpha\).

Define \(\GA^0(X)=X\) and, recursively,
\[
  \GA^{n+1}(X)=\GA(\GA^n(X)).
\]
Since \(\Sig_{\A}\) is finite, the increasing sequence
\[
  \GA^0(X)\subseteq\GA^1(X)\subseteq\GA^2(X)\subseteq\cdots
\]
stabilizes after finitely many steps. The \emph{atlas-generated closure} of \(X\) is
\[
  \CA(X)=\bigcup_{n\geq0}\GA^n(X),
\]
equivalently the stable value of the iteration of \(\GA\) starting from \(X\).
\end{definition}

\begin{lemma}[Monotonicity of the atlas step operator]
\label{lem:atlas-step-monotonicity}
If \(X\subseteq Y\subseteq\Sig_{\A}\), then
\[
  \GA(X)\subseteq\GA(Y).
\]
\end{lemma}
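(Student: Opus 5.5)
The plan is to unfold the definition of \(\GA\) and compare the two sides term by term. There are two kinds of term: the bare set \(X\) versus \(Y\), and, for each chart \(\alpha\in A\), the local closure \(\Cl_\alpha(X\cap\Sig_\alpha)\) versus \(\Cl_\alpha(Y\cap\Sig_\alpha)\). Since a union of sets is monotone in each of its arguments, it suffices to show that every term on the left is contained in the corresponding term on the right.

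First, \(X\subseteq Y\) holds by hypothesis. Second, fix \(\alpha\in A\). Intersecting the hypothesis with \(\Sig_\alpha\) gives
\[
  X\cap\Sig_\alpha\subseteq Y\cap\Sig_\alpha,
\]
and both sides are subsets of \(\Sig_\alpha\). Monotonicity of the local closure operator \(\Cl_\alpha\) (condition (ii) in the definition of a finite closure system, applied in the chart \((\Sig_\alpha,\Cl_\alpha)\)) then yields
\[
  \Cl_\alpha(X\cap\Sig_\alpha)\subseteq\Cl_\alpha(Y\cap\Sig_\alpha)\subseteq\GA(Y).
\]

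Taking the union over all \(\alpha\in A\), together with \(X\subseteq Y\subseteq\GA(Y)\), gives \(\GA(X)\subseteq\GA(Y)\). No real obstacle arises. The only point requiring care is that \(\Cl_\alpha\) is defined only on subsets of \(\Sig_\alpha\), so the intersection with \(\Sig_\alpha\) must be taken before applying it; this is already built into the definition of \(\GA\).
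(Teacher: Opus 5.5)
Your proof is correct and is the paper's argument: intersect \(X\subseteq Y\) with each \(\Sig_\alpha\), apply monotonicity of \(\Cl_\alpha\), and read off \(\GA(X)\subseteq\GA(Y)\) from the defining union. Your note that \(\Cl_\alpha\) only applies to subsets of \(\Sig_\alpha\) is a harmless detail the paper leaves implicit.
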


\begin{proof}
If \(X\subseteq Y\), then \(X\cap\Sig_\alpha\subseteq Y\cap\Sig_\alpha\) for each chart \(\alpha\).  Monotonicity of \(\Cl_\alpha\) gives
\[
  \Cl_\alpha(X\cap\Sig_\alpha)
  \subseteq
  \Cl_\alpha(Y\cap\Sig_\alpha),
\]
and the defining formula for \(\GA\) then yields \(\GA(X)\subseteq\GA(Y)\).
\end{proof}

\begin{proposition}[Atlas-generated closure]
\label{prop:atlas-generated-closure-is-closure}
The map
\[
  \CA:\Pow(\Sig_{\A})\to\Pow(\Sig_{\A})
\]
is a closure operator.
\end{proposition}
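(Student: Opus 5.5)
We verify the three closure axioms for \(\CA\) directly from the definition as the stable value of the increasing iteration \(\GA^0(X)\subseteq\GA^1(X)\subseteq\cdots\). The one preliminary fact we need is that this chain is increasing. This holds because \(\GA(Y)\supseteq Y\) for every \(Y\), by the term \(X\cup\cdots\) in the definition of \(\GA\). Since \(\Sig_{\A}\) is finite, there is some \(N\) (depending on \(X\)) with \(\GA^{n}(X)=\GA^{N}(X)=\CA(X)\) for all \(n\geq N\). In particular \(\GA(\CA(X))=\CA(X)\), so \(\CA(X)\) is a fixed point of \(\GA\).

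\emph{Extensiveness} is immediate: \(X=\GA^0(X)\subseteq\bigcup_n\GA^n(X)=\CA(X)\).

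\emph{Monotonicity} follows from Lemma~\ref{lem:atlas-step-monotonicity} by induction on \(n\). If \(X\subseteq Y\), then \(\GA^n(X)\subseteq\GA^n(Y)\) for every \(n\), because \(\GA^{n+1}(X)=\GA(\GA^n(X))\subseteq\GA(\GA^n(Y))=\GA^{n+1}(Y)\). Taking unions over \(n\) gives \(\CA(X)\subseteq\CA(Y)\).

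\emph{Idempotence} is the only step that needs a small argument, though it is not a real obstacle. Extensiveness already gives \(\CA(X)\subseteq\CA(\CA(X))\). For the reverse inclusion, write \(Z=\CA(X)\). We have shown \(\GA(Z)=Z\), and by induction this gives \(\GA^n(Z)=Z\) for all \(n\). Hence \(\CA(Z)=\bigcup_n\GA^n(Z)=Z\).

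If one prefers not to invoke stabilization explicitly, the fixed-point property can be proved directly. Let \(\varphi\in\Cl_\alpha(Z\cap\Sig_\alpha)\). Since \(Z\cap\Sig_\alpha\) is finite and the chain is increasing, \(Z\cap\Sig_\alpha\subseteq\GA^m(X)\) for some \(m\). Then \(\Cl_\alpha(Z\cap\Sig_\alpha)=\Cl_\alpha(\GA^m(X)\cap\Sig_\alpha)\subseteq\GA^{m+1}(X)\subseteq Z\). This shows \(\GA(Z)\subseteq Z\), and the reverse inclusion is extensiveness of \(\GA\). Either route concludes that \(\CA\) is extensive, monotone and idempotent, as required.
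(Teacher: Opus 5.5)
Your proof is correct and follows the paper's own argument essentially step for step: extensiveness from \(\GA^0(X)=X\), monotonicity from Lemma~\ref{lem:atlas-step-monotonicity} by induction on the iterates, and idempotence from the stabilization \(\GA(\CA(X))=\CA(X)\), which forces every iterate starting at \(\CA(X)\) to equal \(\CA(X)\). Your supplementary fixed-point argument via \(Z\cap\Sig_\alpha\subseteq\GA^m(X)\) is also valid. Its only bonus is that it uses finiteness of each chart rather than stabilization of the whole chain, which the paper does not need.
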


\begin{proof}
Extensiveness is immediate from \(\GA^0(X)=X\).  If \(X\subseteq Y\), Lemma~\ref{lem:atlas-step-monotonicity} gives \(\GA^n(X)\subseteq\GA^n(Y)\) for every \(n\), so taking unions gives \(\CA(X)\subseteq\CA(Y)\).

For idempotence, finiteness of \(\Sig_{\A}\) implies that the increasing sequence \(\GA^0(X)\subseteq\GA^1(X)\subseteq\cdots\) stabilizes.  Choose \(N\) with
\[
  \CA(X)=\GA^N(X)=\GA^{N+1}(X).
\]
Then \(\GA(\CA(X))=\CA(X)\).  Hence every iterate starting from \(\CA(X)\) is again \(\CA(X)\), and therefore
\[
  \CA(\CA(X))=\CA(X).
\]
Thus \(\CA\) is a closure operator.
\end{proof}

\begin{definition}[Global extension of local chart closures]
\label{def:global-extension-chart-closures}
Let
\[
  D:\Pow(\Sig_{\A})\to\Pow(\Sig_{\A})
\]
be a closure operator. We say that \(D\) \emph{extends the local chart closures} if for every chart \(\alpha\in A\) and every \(S\subseteq\Sig_\alpha\),
\[
  \Cl_\alpha(S)\subseteq D(S).
\]
This condition says that every local consequence remains valid globally. It is weaker than conservative realization, which is introduced in the next section.
\end{definition}

\begin{theorem}[Minimality of the atlas-generated closure]
\label{thm:atlas-generated-closure-minimal}
The atlas-generated closure \(\CA\) is the least global closure operator extending all local chart closures. More explicitly:
\begin{enumerate}[label=\textup{(\roman*)}]
  \item \(\CA\) extends each local chart closure;
  \item if \(D:\Pow(\Sig_{\A})\to\Pow(\Sig_{\A})\) is any closure operator extending all local chart closures, then for every \(X\subseteq\Sig_{\A}\),
  \[
    \CA(X)\subseteq D(X).
  \]
\end{enumerate}
\end{theorem}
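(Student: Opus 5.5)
For part (i), I use the first step of the iteration. Fix a chart \(\alpha\) and \(S\subseteq\Sig_\alpha\). Then \(S\cap\Sig_\alpha=S\), so the defining formula for \(\GA\) gives
\[
  \Cl_\alpha(S)\subseteq\GA(S)=\GA^1(S)\subseteq\CA(S).
\]
Together with Proposition~\ref{prop:atlas-generated-closure-is-closure}, which already shows that \(\CA\) is a closure operator, this establishes (i).

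For part (ii), fix a closure operator \(D\) on \(\Pow(\Sig_{\A})\) extending all local chart closures, and fix \(X\subseteq\Sig_{\A}\). I will prove by induction on \(n\) that \(\GA^n(X)\subseteq D(X)\); taking the union over \(n\) then gives \(\CA(X)\subseteq D(X)\). The base case \(\GA^0(X)=X\subseteq D(X)\) is extensiveness of \(D\).

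For the inductive step, assume \(Y:=\GA^n(X)\subseteq D(X)\). It suffices to show \(\GA(Y)\subseteq D(X)\). The part \(Y\) is covered by the hypothesis. For each chart \(\alpha\), the set \(Y\cap\Sig_\alpha\) is a subset of \(\Sig_\alpha\), so the extension hypothesis (Definition~\ref{def:global-extension-chart-closures}) gives \(\Cl_\alpha(Y\cap\Sig_\alpha)\subseteq D(Y\cap\Sig_\alpha)\). Monotonicity of \(D\) gives \(D(Y\cap\Sig_\alpha)\subseteq D(Y)\). Then, using \(Y\subseteq D(X)\), monotonicity again, and idempotence,
\[
  D(Y)\subseteq D(D(X))=D(X).
\]
Taking the union over \(\alpha\) yields \(\GA(Y)\subseteq D(X)\), which completes the induction.

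The only delicate point is the inductive step, which must use all three closure axioms of \(D\). Extensiveness handles the base case. The extension hypothesis applies only to chart-local sets, which is why the iterate must first be intersected with \(\Sig_\alpha\). Monotonicity and idempotence are then needed to absorb \(D(Y)\) back into \(D(X)\). I would state the intermediate claim ``\(Y\subseteq D(X)\) implies \(\GA(Y)\subseteq D(X)\)'' explicitly as the core lemma of the argument. Equivalently, the proof says that \(D(X)\) is a fixed point of \(\GA\) containing \(X\), so it contains the least such fixed point, which is \(\CA(X)\).
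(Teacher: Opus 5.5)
Your proof is correct and follows the paper's argument essentially verbatim. Part (i) goes via \(\Cl_\alpha(S)\subseteq\GA(S)\subseteq\CA(S)\), and part (ii) is the same induction \(\GA^n(X)\subseteq D(X)\), using extensiveness for the base case and the extension hypothesis plus monotonicity and idempotence of \(D\) for the step; the only cosmetic difference is that you pass through \(D(Y)\) before absorbing into \(D(D(X))=D(X)\), whereas the paper applies monotonicity directly to \(Y\cap\Sig_\alpha\subseteq D(X)\).
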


\begin{proof}
First fix a chart \(\alpha\in A\) and a local set \(S\subseteq\Sig_\alpha\). Since \(S\cap\Sig_\alpha=S\), the definition of \(\GA\) gives
\[
  \Cl_\alpha(S)=\Cl_\alpha(S\cap\Sig_\alpha)
  \subseteq \GA(S)
  \subseteq \CA(S).
\]
Thus \(\CA\) extends every local chart closure.

Now let \(D\) be any closure operator on \(\Sig_{\A}\) extending every local chart closure. We prove by induction on \(n\) that
\[
  \GA^n(X)\subseteq D(X)
\]
for every \(n\geq0\).

For \(n=0\), this is just \(X\subseteq D(X)\), which follows from extensiveness of \(D\). Suppose the inclusion holds at stage \(n\), and put
\[
  S_\alpha=\GA^n(X)\cap\Sig_\alpha.
\]
Since \(D\) extends local chart closures,
\[
  \Cl_\alpha(S_\alpha)\subseteq D(S_\alpha).
\]
By the induction hypothesis,
\[
  S_\alpha\subseteq \GA^n(X)\subseteq D(X).
\]
Monotonicity and idempotence of \(D\) give
\[
  D(S_\alpha)\subseteq D(D(X))=D(X).
\]
Hence \(\Cl_\alpha(S_\alpha)\subseteq D(X)\) for every \(\alpha\). Combining this with \(\GA^n(X)\subseteq D(X)\), we obtain
\[
  \GA^{n+1}(X)
  =\GA^n(X)\cup\bigcup_{\alpha\in A}\Cl_\alpha(S_\alpha)
  \subseteq D(X).
\]
The induction is complete. Taking the union over all \(n\) yields
\[
  \CA(X)=\bigcup_{n\geq0}\GA^n(X)\subseteq D(X).
\]
Together with the first paragraph and Proposition~\ref{prop:atlas-generated-closure-is-closure}, this proves that \(\CA\) is the least global closure operator extending the local chart closures.
\end{proof}

\begin{remark}[Extension versus conservative realization]
The minimality theorem concerns extension of local consequence:
\[
  \Cl_\alpha(S)\subseteq D(S).
\]
A global conservative realization satisfies the stronger condition
\[
  D(S)\cap\Sig_\alpha=\Cl_\alpha(S)
  \qquad(S\subseteq\Sig_\alpha).
\]
Thus \(\CA\) always extends the local charts, but it may fail to be conservative over them. The obstruction set introduced in the next section measures exactly this possible failure.
\end{remark}

\begin{example}[Propagation through overlapping charts without obstruction]
\label{ex:propagation-through-charts}
Let
\[
  \Sig_1=\{p,q\},\qquad \Sig_2=\{q,r\}.
\]
Let \(\Cl_1\) be generated by \(p\vdash_1 q\), and let \(\Cl_2\) be generated by \(q\vdash_2 r\).  Starting from
\[
  X_0=\{p\},
\]
the atlas iteration gives
\[
  X_1=\GA(X_0)=\{p,q\},
\]
because chart \(1\) sees \(p\) and adds \(q\).  Then
\[
  X_2=\GA(X_1)=\{p,q,r\},
\]
because chart \(2\) sees \(q\) and adds \(r\).  The process now stabilizes, so
\[
  \CA(\{p\})=\{p,q,r\}.
\]
Thus the atlas-generated closure forces the global consequence \(p\vdash r\), although no single chart contains both the premise \(p\) and the conclusion \(r\) together with a direct rule \(p\vdash r\).  The consequence is produced by propagation through the overlap sentence \(q\).  By minimality, every global closure operator extending both chart closures must contain this propagated consequence.

This propagated consequence is not yet an obstruction.  For chart \(1\), the four local subsets give
\[
\begin{array}{c|c|c}
S & \CA(S)\cap\Sig_1 & \Cl_1(S)\\
\hline
\varnothing & \varnothing & \varnothing\\
\{p\} & \{p,q\} & \{p,q\}\\
\{q\} & \{q\} & \{q\}\\
\{p,q\} & \{p,q\} & \{p,q\}
\end{array}
\]
and for chart \(2\), the corresponding equalities are
\[
\begin{array}{c|c|c}
S & \CA(S)\cap\Sig_2 & \Cl_2(S)\\
\hline
\varnothing & \varnothing & \varnothing\\
\{q\} & \{q,r\} & \{q,r\}\\
\{r\} & \{r\} & \{r\}\\
\{q,r\} & \{q,r\} & \{q,r\}.
\end{array}
\]
Therefore \(E_{\alpha,S}=\varnothing\) for every chart \(\alpha\) and every local premise set \(S\subseteq\Sig_\alpha\).  Hence \(\Obs(\A)=\varnothing\), and the atlas-generated closure \(\CA\) is itself a global conservative realization.  The point is that obstruction is not propagation itself; obstruction occurs only when propagated information returns as a chart-visible consequence not already validated by that chart.

\begin{figure}[!htbp]
\centering
\begin{tikzpicture}[
  sent/.style={circle, draw, minimum size=8mm},
  rule/.style={-{Latex[length=2mm]}, thick},
  chart/.style={draw, rounded corners, inner sep=8pt},
  every node/.style={font=\small}
]
\node[sent] (p) at (0,0) {$p$};
\node[sent] (q) at (2.4,0) {$q$};
\node[sent] (r) at (4.8,0) {$r$};

\draw[rule] (p) -- node[above] {$\Cl_1$} (q);
\draw[rule] (q) -- node[above] {$\Cl_2$} (r);

\node[chart, fit=(p)(q), label={[font=\footnotesize]below:$\Sigma_1=\{p,q\}$}] {};
\node[chart, fit=(q)(r), label={[font=\footnotesize]below:$\Sigma_2=\{q,r\}$}] {};

\node[font=\footnotesize] at (2.4,-1.35)
  {$\{p\}\subseteq X_1=\{p,q\}\subseteq X_2=\{p,q,r\}$};
\end{tikzpicture}
\caption{Atlas propagation without obstruction.}
\label{fig:atlas-propagation-no-obstruction}
\end{figure}

\end{example}

The example illustrates why the atlas-generated closure is the canonical object: it contains exactly the consequences forced by repeated local propagation.  It also shows that a propagated global consequence need not be an obstruction.  The remaining question is whether any forced consequence becomes visible inside a chart where it was not locally valid.  That question is answered by the obstruction criterion.

\section{Local-to-Global Obstructions}
\label{sec:obstructions}

The atlas-generated closure \(\CA\) is the least global closure operator extending all local chart closures.  Extension alone is not enough for a faithful globalization: a global closure may preserve every local implication while also creating a new implication whose premises and conclusion already lie in one chart language.  Such a consequence is visible locally, but not locally valid.  The obstruction set records exactly these chart-visible excess consequences.

\begin{definition}[Global conservative realization]
Let
\[
  \A=\{(\Sig_\alpha,\Cl_\alpha)\}_{\alpha\in A}
\]
be a finite closure atlas, with global sentence universe \(\Sig_{\A}\).  A closure operator
\[
  D:\Pow(\Sig_{\A})\to\Pow(\Sig_{\A})
\]
is a \emph{global conservative realization} of \(\A\) if for every chart \(\alpha\in A\) and every local premise set \(S\subseteq\Sig_\alpha\),
\[
  D(S)\cap\Sig_\alpha=\Cl_\alpha(S).
\]
Equivalently, the local consequence relation of chart \(\alpha\) is recovered exactly by restricting the global closure back to \(\Sig_\alpha\).  This is stronger than extension: extension requires only \(\Cl_\alpha(S)\subseteq D(S)\), while conservative realization also forbids new \(\Sig_\alpha\)-sentences from appearing in \(D(S)\).
\end{definition}

\begin{definition}[Chart excess and obstruction set]
\label{def:atlas-obstruction-set}
For a chart \(\alpha\in A\) and a local premise set \(S\subseteq\Sig_\alpha\), define the \emph{chart excess} of the atlas-generated closure over chart \(\alpha\) at \(S\) by
\[
  E_{\alpha,S}
  =
  \bigl(\CA(S)\cap\Sig_\alpha\bigr)\setminus \Cl_\alpha(S).
\]
Thus \(E_{\alpha,S}\) consists of the \(\alpha\)-visible consequences that are forced by propagation through the whole atlas but are not consequences in chart \(\alpha\) itself.  The \emph{obstruction set} is
\[
  \Obs(\A)
  =
  \{(\alpha,S,\varphi):
      S\subseteq\Sig_\alpha,
      \ \varphi\in E_{\alpha,S}\}.
\]
Equivalently, \((\alpha,S,\varphi)\in\Obs(\A)\) iff
\[
  S\subseteq\Sig_\alpha,
  \qquad
  \varphi\in\Sig_\alpha,
  \qquad
  \varphi\in\CA(S),
  \qquad
  \varphi\notin\Cl_\alpha(S).
\]
\end{definition}

The theorem below is the technical core of the manuscript.  The point is not merely that obstructions prevent conservative globalization.  The canonical closure \(\CA\) is minimal among all global extensions, so every possible global realization must contain the consequences generated by \(\CA\).  Therefore any chart-visible excess produced by \(\CA\) is unavoidable.  Conversely, if \(\CA\) produces no such excess, then the canonical closure itself already realizes the atlas conservatively.

\begin{theorem}[Local-to-global obstruction criterion]
\label{thm:complete-closure-atlas-obstruction}
For a finite closure atlas \(\A\), the following are equivalent:
\begin{enumerate}[label=\textup{(\roman*)}]
  \item \(\A\) has a global conservative realization;
  \item the atlas-generated closure \(\CA\) is a global conservative realization of \(\A\);
  \item \(\Obs(\A)=\varnothing\).
\end{enumerate}
Equivalently, conservative globalizability is exactly the condition that
\[
  \CA(S)\cap\Sig_\alpha=\Cl_\alpha(S)
\]
for every chart \(\alpha\in A\) and every local set \(S\subseteq\Sig_\alpha\).
\end{theorem}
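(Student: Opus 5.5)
I will prove the cycle of implications (ii)\(\Rightarrow\)(i)\(\Rightarrow\)(iii)\(\Rightarrow\)(ii), and then read off the final ``equivalently'' clause as a restatement of (ii). The implication (ii)\(\Rightarrow\)(i) is immediate, since \(\CA\) is a closure operator on \(\Pow(\Sig_{\A})\) by Proposition~\ref{prop:atlas-generated-closure-is-closure}; so if it satisfies the conservative-restriction identities, it is a witness for (i).

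For (i)\(\Rightarrow\)(iii), let \(D\) be a global conservative realization. Restricting the defining identity \(D(S)\cap\Sig_\alpha=\Cl_\alpha(S)\) gives in particular \(\Cl_\alpha(S)\subseteq D(S)\) for every chart \(\alpha\) and every \(S\subseteq\Sig_\alpha\). Thus \(D\) extends the local chart closures in the sense of Definition~\ref{def:global-extension-chart-closures}. Theorem~\ref{thm:atlas-generated-closure-minimal}(ii) then yields \(\CA(S)\subseteq D(S)\). Intersecting with \(\Sig_\alpha\) gives
\[
  \CA(S)\cap\Sig_\alpha\subseteq D(S)\cap\Sig_\alpha=\Cl_\alpha(S),
\]
so \(E_{\alpha,S}=\varnothing\) for every chart and every local premise set. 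Hence \(\Obs(\A)=\varnothing\). This is the one step with real content: a conservative \(D\) might a priori avoid a consequence that \(\CA\) forces, and the minimality theorem is exactly what rules this out.

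For (iii)\(\Rightarrow\)(ii), assume \(\Obs(\A)=\varnothing\), so \(E_{\alpha,S}=\varnothing\) and therefore \(\CA(S)\cap\Sig_\alpha\subseteq\Cl_\alpha(S)\) for all \(\alpha\) and all \(S\subseteq\Sig_\alpha\). The reverse inclusion \(\Cl_\alpha(S)\subseteq\CA(S)\cap\Sig_\alpha\) comes from Theorem~\ref{thm:atlas-generated-closure-minimal}(i), together with \(\Cl_\alpha(S)\subseteq\Sig_\alpha\). Equality follows, so \(\CA\) is a global conservative realization.

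The final displayed condition is literally the definition of (ii) specialized to \(D=\CA\), so it is equivalent to (i) by the cycle just established. I do not expect a genuine obstacle: everything reduces to Theorem~\ref{thm:atlas-generated-closure-minimal}. The only point needing care is to state explicitly that conservative realization implies extension, so that the minimality theorem applies in (i)\(\Rightarrow\)(iii).
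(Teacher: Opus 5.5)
Your proposal is correct and follows essentially the same route as the paper: (ii)\(\Rightarrow\)(i) is immediate, (i)\(\Rightarrow\)(iii) observes that a conservative realization extends the charts and so contains \(\CA\) by the minimality theorem, and (iii)\(\Rightarrow\)(ii) combines empty excess sets with the extension property of \(\CA\). The only difference is cosmetic: you argue (i)\(\Rightarrow\)(iii) by a direct inclusion, whereas the paper argues it by contradiction.
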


\begin{proof}
The implication \(\textup{(ii)}\Rightarrow\textup{(i)}\) is immediate.  We prove \(\textup{(i)}\Rightarrow\textup{(iii)}\) and \(\textup{(iii)}\Rightarrow\textup{(ii)}\).

Assume first that \(D\) is a global conservative realization.  Then for each \(S\subseteq\Sig_\alpha\),
\[
  \Cl_\alpha(S)=D(S)\cap\Sig_\alpha\subseteq D(S),
\]
so \(D\) extends every local chart closure.  By the minimality of \(\CA\),
\[
  \CA(X)\subseteq D(X)
\]
for every \(X\subseteq\Sig_{\A}\).

Suppose, toward a contradiction, that \((\alpha,S,\varphi)\in\Obs(\A)\).  Then \(S\subseteq\Sig_\alpha\), \(\varphi\in\Sig_\alpha\), \(\varphi\in\CA(S)\), and \(\varphi\notin\Cl_\alpha(S)\).  Since \(\CA(S)\subseteq D(S)\), we have
\[
  \varphi\in D(S)\cap\Sig_\alpha.
\]
Conservativity of \(D\) over chart \(\alpha\) gives
\[
  D(S)\cap\Sig_\alpha=\Cl_\alpha(S),
\]
contradicting \(\varphi\notin\Cl_\alpha(S)\).  Hence \(\Obs(\A)=\varnothing\).

Conversely, assume \(\Obs(\A)=\varnothing\).  By Proposition~\ref{prop:atlas-generated-closure-is-closure}, \(\CA\) is a closure operator on \(\Sig_{\A}\).  It remains only to prove conservative restriction.  Fix \(\alpha\in A\) and \(S\subseteq\Sig_\alpha\).  Since \(\CA\) extends the local chart closures,
\[
  \Cl_\alpha(S)\subseteq \CA(S)\cap\Sig_\alpha.
\]
For the reverse inclusion, if \(\varphi\in\CA(S)\cap\Sig_\alpha\) but \(\varphi\notin\Cl_\alpha(S)\), then \((\alpha,S,\varphi)\in\Obs(\A)\), contradicting emptiness.  Therefore
\[
  \CA(S)\cap\Sig_\alpha\subseteq\Cl_\alpha(S).
\]
The two inclusions give
\[
  \CA(S)\cap\Sig_\alpha=\Cl_\alpha(S)
\]
for every chart \(\alpha\) and every local premise set \(S\subseteq\Sig_\alpha\).  Thus \(\CA\) is a global conservative realization.
\end{proof}

\begin{corollary}[Finite obstruction algorithm]
\label{cor:finite-obstruction-algorithm}
There is a finite algorithm which, given a finite closure atlas \(\A\), computes \(\Obs(\A)\).  For each chart \(\alpha\in A\) and each local subset \(S\subseteq\Sig_\alpha\), compute \(\CA(S)\) by iterating \(\GA\), form
\[
  E_{\alpha,S}
  =
  \bigl(\CA(S)\cap\Sig_\alpha\bigr)\setminus\Cl_\alpha(S),
\]
and output all triples \((\alpha,S,\varphi)\) with \(\varphi\in E_{\alpha,S}\).  The resulting finite set is exactly \(\Obs(\A)\).  Consequently, \(\A\) has a global conservative realization if and only if the algorithm returns the empty set.
\end{corollary}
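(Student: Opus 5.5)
The corollary to be proved is \Cref{cor:finite-obstruction-algorithm}. I would prove it in three steps: show that each ingredient of the described procedure is a finite computation, show that its output coincides with \(\Obs(\A)\) by definition, and then read off the final equivalence from \Cref{thm:complete-closure-atlas-obstruction}.

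\emph{Finiteness of the procedure.} I assume the atlas is presented so that each \(\Cl_\alpha\) can be evaluated on any subset of \(\Sig_\alpha\), for instance as a finite table. The chart index set \(A\) is finite, and each \(\Sig_\alpha\) is finite, so there are finitely many pairs \((\alpha,S)\) with \(S\subseteq\Sig_\alpha\). For a fixed \(S\), a single evaluation of \(\GA\) needs only finitely many evaluations of the \(\Cl_\alpha\), followed by a finite union.

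\emph{Termination of the iteration.} The iterates \(\GA^n(S)\) form an increasing chain inside the finite set \(\Sig_{\A}\). Any strict increase adds at least one element, so the chain stabilizes after at most \(|\Sig_{\A}|\) steps. I would stop at the first \(N\) with \(\GA^{N+1}(S)=\GA^N(S)\). At that point every later iterate equals \(\GA^N(S)\), because \(\GA\) maps a fixed point to itself. Hence
\[
  \CA(S)=\bigcup_{n\ge0}\GA^n(S)=\GA^N(S).
\]
This is the only point that needs an argument: the stopping test must be detectable, and detecting a repeated value is enough, exactly as in the proof of \Cref{prop:atlas-generated-closure-is-closure}. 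Once \(\CA(S)\) is known, forming \(E_{\alpha,S}\) takes one finite intersection and one finite difference.

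\emph{Correctness and the final equivalence.} The output set is \(\{(\alpha,S,\varphi): S\subseteq\Sig_\alpha,\ \varphi\in E_{\alpha,S}\}\), which is literally the definition of \(\Obs(\A)\) in \Cref{def:atlas-obstruction-set}. Therefore the algorithm returns the empty set exactly when \(\Obs(\A)=\varnothing\). By the equivalence (i)\(\Leftrightarrow\)(iii) of \Cref{thm:complete-closure-atlas-obstruction}, this happens exactly when \(\A\) has a global conservative realization.

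I expect no genuine obstacle here.
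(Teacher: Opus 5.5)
Your proposal is correct and follows essentially the same route as the paper: finitely many pairs \((\alpha,S)\), stabilization of the increasing chain \(\GA^n(S)\) inside the finite set \(\Sig_{\A}\) with stable value \(\CA(S)\), the output coinciding by definition with \(\Obs(\A)\), and the final equivalence read off from \Cref{thm:complete-closure-atlas-obstruction}. Your explicit stopping test and your assumption that each \(\Cl_\alpha\) can be evaluated (for instance as a finite table) are details the paper leaves implicit.
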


\begin{proof}
There are finitely many charts, and each \(\Sig_\alpha\) has finitely many subsets.  For fixed \(\alpha\) and \(S\subseteq\Sig_\alpha\), the sequence
\[
  S=\GA^0(S)\subseteq\GA^1(S)\subseteq\GA^2(S)\subseteq\cdots
\]
is increasing and contained in the finite set \(\Sig_{\A}\).  It therefore stabilizes after finitely many steps, and its stable value is \(\CA(S)\).  The formula for \(E_{\alpha,S}\) then lists exactly the chart-visible excess consequences at \((\alpha,S)\).  Taking all charts and all local premise sets therefore outputs precisely \(\Obs(\A)\).  The final equivalence follows from Theorem~\ref{thm:complete-closure-atlas-obstruction}.
\end{proof}

\begin{remark}[Pseudocode form]
The finite procedure can be written as follows.
\begin{quote}
\begin{verbatim}
O := empty set
Sigma := union_alpha Sigma_alpha

for each alpha in A:
    for each S subset Sigma_alpha:
        X := S
        repeat:
            Y := X union union_beta cl_beta(X intersection Sigma_beta)
            if Y = X:
                break
            X := Y
        C := X

        E := (C intersection Sigma_alpha) minus cl_alpha(S)
        for each phi in E:
            O := O union {(alpha,S,phi)}

return O
\end{verbatim}
\end{quote}
The fixed point \(X\) reached by the loop is \(\CA(S)\).
\end{remark}

\begin{remark}[Finite complexity]
Let \(m=|\Sig_{\A}|\) and \(m_\alpha=|\Sig_\alpha|\).  The algorithm tests
\[
  \sum_{\alpha\in A}2^{m_\alpha}
\]
local subsets.  For each tested subset \(S\), the iteration of \(\GA\) stabilizes after at most \(m-|S|\leq m\) strict growth stages.  Thus the number of global iteration rounds is bounded by
\[
  m\sum_{\alpha\in A}2^{m_\alpha}.
\]
This is a finite decidability bound, not an efficiency claim.
\end{remark}

\begin{example}[A chart-visible obstruction from pairwise compatible overlaps]
\label{ex:pairwise-compatible-atlas-not-realizable}
This example adds a third chart to the propagation pattern above.  The added chart makes the propagated conclusion visible locally while refusing it locally.

Let the global sentence set be \(\{p,q,r\}\), and consider three charts
\[
  \Sig_{pq}=\{p,q\},
  \qquad
  \Sig_{qr}=\{q,r\},
  \qquad
  \Sig_{pr}=\{p,r\}.
\]
Let \(\Cl_{pq}\) be generated by \(p\vdash_{pq}q\), let \(\Cl_{qr}\) be generated by \(q\vdash_{qr}r\), and let \(\Cl_{pr}\) be identity closure on \(\{p,r\}\).  Thus the \(pr\)-chart explicitly has
\[
  p\nvdash_{pr} r.
\]
The pairwise overlaps are singleton languages:
\[
  \Sig_{pq}\cap\Sig_{qr}=\{q\},
  \qquad
  \Sig_{qr}\cap\Sig_{pr}=\{r\},
  \qquad
  \Sig_{pq}\cap\Sig_{pr}=\{p\}.
\]
On these singleton overlaps there is no direct pairwise disagreement: restricting either adjacent chart to the overlap gives the identity closure on the singleton language.  Nevertheless, the atlas has no global conservative realization.

Indeed, compute the atlas-generated closure of \(S=\{p\}\subseteq\Sig_{pr}\).  The \(pq\)-chart first adds \(q\):
\[
  \GA(\{p\})=\{p,q\}.
\]
Then the \(qr\)-chart adds \(r\):
\[
  \GA^2(\{p\})=\{p,q,r\}.
\]
Thus
\[
  \CA(\{p\})=\{p,q,r\}.
\]
Restricting this global closure back to the \(pr\)-chart gives
\[
  \CA(\{p\})\cap\Sig_{pr}=\{p,r\},
\]
whereas locally
\[
  \Cl_{pr}(\{p\})=\{p\}.
\]
Therefore
\[
  r\in\CA(\{p\})
  \qquad\text{but}\qquad
  r\notin\Cl_{pr}(\{p\}).
\]
So
\[
  (pr,\{p\},r)\in\Obs(\A).
\]
By Theorem~\ref{thm:complete-closure-atlas-obstruction}, no global conservative realization exists.  This example shows why pairwise overlap agreement is weaker than the obstruction criterion: the failure appears only after propagation through more than one chart.

\begin{figure}[!htbp]
\centering
\begin{tikzpicture}[
  sent/.style={circle, draw, minimum size=8mm},
  rule/.style={-{Latex[length=2mm]}, thick},
  denied/.style={-{Latex[length=2mm]}, thick, dashed},
  chart/.style={draw, rounded corners, inner sep=7pt},
  every node/.style={font=\small}
]
\node[sent] (p) at (0,0) {$p$};
\node[sent] (q) at (2.6,1.65) {$q$};
\node[sent] (r) at (5.2,0) {$r$};

\draw[rule] (p) -- node[above left] {$pq$} (q);
\draw[rule] (q) -- node[above right] {$qr$} (r);
\draw[denied] (p) -- node[below=1pt] {$p\nvdash_{pr} r$} (r);

\node[chart, fit=(p)(q), label={[font=\footnotesize]above left:$\Sigma_{pq}$}] {};
\node[chart, fit=(q)(r), label={[font=\footnotesize]above right:$\Sigma_{qr}$}] {};
\node[chart, fit=(p)(r), label={[font=\footnotesize]below:$\Sigma_{pr}$}] {};

\end{tikzpicture}
\caption{A chart-visible obstruction: propagation forces \(p\vdash r\), but the \(pr\)-chart does not validate it.}
\label{fig:chart-visible-obstruction}
\end{figure}

\end{example}

\begin{remark}[Pairwise compatibility is insufficient]
In the preceding example, no pair of charts directly disagrees on its overlap.  The failure is caused by the interaction
\[
  p\vdash_{pq}q,
  \qquad
  q\vdash_{qr}r.
\]
Any global closure extending the first two charts must satisfy \(p\vdash r\).  But the third chart, whose language contains \(p\) and \(r\), requires \(p\nvdash_{pr}r\).  The obstruction is therefore genuinely local-to-global: it appears only after consequences are propagated through the atlas.
\end{remark}

\begin{remark}[Premise sets in obstructions]
The obstruction set permits arbitrary finite local premise sets \(S\subseteq\Sig_\alpha\).  This is intentional.  Multi-premise local consequences can produce chart-visible failures not detected by singleton-premise paths.  Theorem~\ref{thm:complete-closure-atlas-obstruction} gives the closure-level realization problem for finite closure atlases; the next section treats the separate problem of gluing selected local closed theories.
\end{remark}

\section{Compatible Local Closed Theories}
\label{sec:closed-theory-gluing}

The obstruction theorem concerns closure systems: it asks when the local consequence relations of an atlas can be recovered as conservative restrictions of a single global closure relation.  We now turn to a related but simpler gluing problem for selected local closed theories.  Here the canonical closure \(\CA\) is fixed in advance, and the question is whether local closed theories can be represented by a single global closed set.  Exact agreement on overlaps, together with local closedness, is enough.

\begin{definition}[Compatible local closed theories and canonical union]
Let
\[
  \A=\{(\Sig_\alpha,\Cl_\alpha)\}_{\alpha\in A}
\]
be a finite closure atlas.  A \emph{local closed-theory family} on \(\A\) is a tuple
\[
  \mathbf T=(T_\alpha)_{\alpha\in A}
\]
such that \(T_\alpha\subseteq\Sig_\alpha\) and \(\Cl_\alpha(T_\alpha)=T_\alpha\) for every \(\alpha\in A\).

The family is \emph{overlap-compatible} if, for every pair \(\alpha,\beta\in A\),
\[
  T_\alpha\cap\Sig_{\alpha\beta}
  =
  T_\beta\cap\Sig_{\alpha\beta},
\]
where \(\Sig_{\alpha\beta}=\Sig_\alpha\cap\Sig_\beta\).  Thus every sentence visible in two charts has the same membership status in the corresponding local theories.

For such a family, its \emph{canonical union} is
\[
  \Tcup=\bigcup_{\alpha\in A}T_\alpha\subseteq\Sig_{\A}.
\]
If a global set \(T\subseteq\Sig_{\A}\) restricts exactly to every \(T_\alpha\), then necessarily \(T=\Tcup\).  Hence \(\Tcup\) is the only possible underlying set of an exact global representative of the local family.
\end{definition}

\begin{counterexample}[Overlap incompatibility prevents exact gluing]
\label{cex:overlap-incompatibility-prevents-gluing}
Let
\[
  \Sig_1=\{p,q\},
  \qquad
  \Sig_2=\{q,r\},
\]
with identity closures on both charts.  Take local closed theories
\[
  T_1=\{p,q\},
  \qquad
  T_2=\{r\}.
\]
Both are locally closed, but they disagree on the overlap \(\Sig_{12}=\{q\}\):
\[
  T_1\cap\Sig_{12}=\{q\},
  \qquad
  T_2\cap\Sig_{12}=\varnothing.
\]
No global set \(T\subseteq\{p,q,r\}\) can restrict exactly to both local theories.  If \(T\cap\Sig_1=T_1\), then \(q\in T\); if \(T\cap\Sig_2=T_2\), then \(q\notin T\).  Thus overlap compatibility is not a decorative hypothesis: it is necessary even before any closure operation is applied.
\end{counterexample}

\begin{lemma}[Union-restriction lemma]
\label{lem:union-restriction}
If \(\mathbf T=(T_\alpha)_{\alpha\in A}\) is overlap-compatible, then for every \(\alpha\in A\),
\[
  \Tcup\cap\Sig_\alpha=T_\alpha.
\]
\end{lemma}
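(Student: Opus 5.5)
The plan is a direct double-inclusion argument at the level of sets; no closure property is needed, only the definition of \(\Tcup\) and overlap-compatibility. Fix \(\alpha\in A\). Distributing the intersection over the union gives
\[
  \Tcup\cap\Sig_\alpha=\bigcup_{\beta\in A}\bigl(T_\beta\cap\Sig_\alpha\bigr),
\]
so it suffices to show that each term \(T_\beta\cap\Sig_\alpha\) lies in \(T_\alpha\), and that the term with \(\beta=\alpha\) equals \(T_\alpha\).

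For the inclusion \(T_\alpha\subseteq\Tcup\cap\Sig_\alpha\), note that \(T_\alpha\subseteq\Tcup\) by definition of the union. Also \(T_\alpha\subseteq\Sig_\alpha\), because a local closed-theory family requires \(T_\alpha\subseteq\Sig_\alpha\). For the reverse inclusion, take \(\varphi\in\Tcup\cap\Sig_\alpha\). Then \(\varphi\in T_\beta\) for some \(\beta\). Since \(T_\beta\subseteq\Sig_\beta\), we get \(\varphi\in\Sig_\beta\cap\Sig_\alpha=\Sig_{\alpha\beta}\). Hence \(\varphi\in T_\beta\cap\Sig_{\alpha\beta}\). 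Overlap-compatibility gives \(T_\beta\cap\Sig_{\alpha\beta}=T_\alpha\cap\Sig_{\alpha\beta}\), so \(\varphi\in T_\alpha\).

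The only point needing care is placing \(\varphi\) in the overlap language \(\Sig_{\alpha\beta}\) before applying compatibility. This uses the containment \(T_\beta\subseteq\Sig_\beta\), and it is where overlap-compatibility does all the work. Counterexample~\ref{cex:overlap-incompatibility-prevents-gluing} shows the hypothesis cannot be dropped. Local closedness \(\Cl_\alpha(T_\alpha)=T_\alpha\) plays no role here; I expect it to be used only in the subsequent claim that \(\Tcup\) is \(\CA\)-closed.
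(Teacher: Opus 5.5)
Your proof is correct and follows the paper's argument essentially step for step. The inclusion \(T_\alpha\subseteq\Tcup\cap\Sig_\alpha\) is immediate. The reverse inclusion places \(\varphi\in T_\beta\) in the overlap \(\Sig_{\alpha\beta}\) via \(T_\beta\subseteq\Sig_\beta\) and then applies overlap-compatibility. You are also right that local closedness plays no role here; the paper uses it only to show \(\CA(\Tcup)=\Tcup\) in the gluing theorem.
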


\begin{proof}
The inclusion \(T_\alpha\subseteq\Tcup\cap\Sig_\alpha\) is immediate.  Conversely, let \(\varphi\in\Tcup\cap\Sig_\alpha\).  Since \(\varphi\in\Tcup\), there is some \(\beta\in A\) such that \(\varphi\in T_\beta\).  Because \(T_\beta\subseteq\Sig_\beta\) and \(\varphi\in\Sig_\alpha\), we have \(\varphi\in\Sig_{\alpha\beta}\).  By overlap compatibility,
\[
  T_\alpha\cap\Sig_{\alpha\beta}
  =
  T_\beta\cap\Sig_{\alpha\beta}.
\]
Thus \(\varphi\in T_\alpha\).  Hence \(\Tcup\cap\Sig_\alpha\subseteq T_\alpha\), proving the equality.
\end{proof}

\begin{theorem}[Canonical gluing theorem]
\label{thm:canonical-closed-theory-gluing}
Let \(\A\) be a finite closure atlas, and let \(\CA\) be its atlas-generated closure.  If \(\mathbf T=(T_\alpha)_{\alpha\in A}\) is an overlap-compatible local closed-theory family, then
\[
  \Tcup\cap\Sig_\alpha=T_\alpha
\]
for every \(\alpha\in A\), and
\[
  \CA(\Tcup)=\Tcup.
\]
Thus compatible local closed theories glue exactly under the canonical atlas-generated closure.
\end{theorem}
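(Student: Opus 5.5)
The first assertion, \(\Tcup\cap\Sig_\alpha=T_\alpha\) for every \(\alpha\in A\), is exactly Lemma~\ref{lem:union-restriction}. We simply invoke it, since overlap-compatibility is the only hypothesis that lemma needs. Local closedness of the \(T_\alpha\) is not used at this stage.

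For the closedness assertion \(\CA(\Tcup)=\Tcup\), the plan is to show that \(\Tcup\) is a fixed point of the step operator, \(\GA(\Tcup)=\Tcup\). From this, every iterate satisfies \(\GA^n(\Tcup)=\Tcup\) by a trivial induction, and hence \(\CA(\Tcup)=\bigcup_n\GA^n(\Tcup)=\Tcup\). The inclusion \(\Tcup\subseteq\GA(\Tcup)\) is built into the definition of \(\GA\). For the reverse inclusion, fix a chart \(\beta\in A\) and compute the part of \(\Tcup\) visible to it. By the first assertion this is \(\Tcup\cap\Sig_\beta=T_\beta\), so
\[
  \Cl_\beta(\Tcup\cap\Sig_\beta)=\Cl_\beta(T_\beta)=T_\beta\subseteq\Tcup,
\]
using that \(T_\beta\) is \(\Cl_\beta\)-closed. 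Taking the union over \(\beta\) gives \(\GA(\Tcup)\subseteq\Tcup\).

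The only step with any content is this reverse inclusion \(\GA(\Tcup)\subseteq\Tcup\). The point to get right there is that the step operator applies \(\Cl_\beta\) to \(\Tcup\cap\Sig_\beta\), not to \(T_\beta\). This is exactly why the restriction identity from Lemma~\ref{lem:union-restriction} must be established first. Without overlap-compatibility, \(\Tcup\cap\Sig_\beta\) could be strictly larger than \(T_\beta\), and its \(\Cl_\beta\)-closure could escape \(\Tcup\). Once the restriction identity is in place, both local closedness and membership in the union are immediate, and no appeal to the minimality of \(\CA\) (Theorem~\ref{thm:atlas-generated-closure-minimal}) or to the obstruction criterion is needed. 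In particular, the gluing holds whether or not \(\Obs(\A)\) is empty.
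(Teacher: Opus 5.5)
Your proposal is correct and follows the paper's proof essentially step for step. You get the restriction identity from Lemma~\ref{lem:union-restriction}; local closedness then gives \(\Cl_\alpha(\Tcup\cap\Sig_\alpha)=T_\alpha\), so \(\GA(\Tcup)=\Tcup\), every iterate stays at \(\Tcup\), and hence \(\CA(\Tcup)=\Tcup\).
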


\begin{proof}
The restriction identity is Lemma~\ref{lem:union-restriction}.  It remains to prove that \(\Tcup\) is \(\CA\)-closed.  By the same lemma,
\[
  \Tcup\cap\Sig_\alpha=T_\alpha
\]
for every \(\alpha\).  Since each local theory is closed,
\[
  \Cl_\alpha(\Tcup\cap\Sig_\alpha)
  =
  \Cl_\alpha(T_\alpha)
  =
  T_\alpha.
\]
Therefore the atlas step operator satisfies
\[
  \GA(\Tcup)
  =
  \Tcup\cup\bigcup_{\alpha\in A}\Cl_\alpha(\Tcup\cap\Sig_\alpha)
  =
  \Tcup\cup\bigcup_{\alpha\in A}T_\alpha
  =
  \Tcup.
\]
Thus \(\Tcup\) is fixed by \(\GA\).  All subsequent iterates are again \(\Tcup\), so the stable value defining \(\CA(\Tcup)\) is \(\Tcup\) itself.
\end{proof}

\begin{remark}[Relation to the obstruction theorem]
Theorem~\ref{thm:canonical-closed-theory-gluing} does not require \(\Obs(\A)=\varnothing\).  It is a statement about the closure operator \(\CA\), which exists for every finite closure atlas.  If the obstruction set is empty, then Theorem~\ref{thm:complete-closure-atlas-obstruction} says that \(\CA\) is also a global conservative realization of the atlas.  In that case the compatible local family glues into a closed theory of the canonical conservative realization.  If the obstruction set is nonempty, the same set \(\Tcup\) is still closed under \(\CA\), but \(\CA\) is not conservative over all charts.
\end{remark}

\begin{example}[Successful canonical gluing]
\label{ex:successful-canonical-gluing}
Let
\[
  \Sig_1=\{p,q\},
  \qquad
  \Sig_2=\{q,r\}.
\]
Let \(\Cl_1\) be generated by \(p\vdash_1 q\), and let \(\Cl_2\) be generated by \(q\vdash_2 r\).  Choose local closed theories
\[
  T_1=\{p,q\},
  \qquad
  T_2=\{q,r\}.
\]
They are closed because \(T_1\) already contains the \(\Cl_1\)-consequence of \(p\), and \(T_2\) already contains the \(\Cl_2\)-consequence of \(q\).  The overlap language is
\[
  \Sig_{12}=\{q\},
\]
and both local theories restrict to \(\{q\}\) on that overlap.  Hence the family is overlap-compatible.

The canonical union is
\[
  \Tcup=T_1\cup T_2=\{p,q,r\}.
\]
By the union-restriction lemma,
\[
  \Tcup\cap\Sig_1=T_1,
  \qquad
  \Tcup\cap\Sig_2=T_2.
\]
Moreover, each chart sees a closed local theory when it restricts \(\Tcup\) to its own language, so one atlas step adds nothing:
\[
  \GA(\Tcup)=\Tcup.
\]
Therefore
\[
  \CA(\Tcup)=\Tcup.
\]
The example displays the exact content of the gluing theorem: compatibility gives the prescribed restrictions, and local closedness makes the canonical union globally closed under \(\CA\).
\end{example}

\begin{remark}[No second canonical obstruction]
Theorem~\ref{thm:canonical-closed-theory-gluing} is elementary.  Its role is to separate two questions.  The nontrivial obstruction problem is the closure-level problem of Section~\ref{sec:obstructions}: whether local consequence relations globalize conservatively.  Once the canonical closure \(\CA\) has been fixed, exactly overlap-compatible closed local theories have no further canonical gluing obstruction.  Stronger questions about gluing under larger noncanonical conservative realizations lead to the conservative-completion problems mentioned in the outlook, but they are not part of the present paper.
\end{remark}

\section{Discussion, Limitations, and Outlook}
\label{sec:conclusion}

The preceding sections establish a finite closure-theoretic framework for comparing local and global closure data.  The main object is a finite closure atlas
\[
  \A=\{(\Sig_\alpha,\Cl_\alpha)\}_{\alpha\in A},
\]
whose chart closures propagate through overlaps by the step operator \(G_{\A}\).  Iterating this operator produces the atlas-generated closure \(\CA\), the least global closure operator extending the local chart closures.

The main theorem gives an exact finite answer to the conservative-globalization problem.  The obstruction set records precisely where this least global propagation creates a chart-visible consequence not already present in the corresponding chart.  The local-to-global obstruction criterion proves the equivalence
\[
\begin{aligned}
  \A\text{ has a global conservative realization}
  &\Longleftrightarrow
  \CA\text{ is a global conservative realization} \\
  &\Longleftrightarrow
  \Obs(\A)=\varnothing.
\end{aligned}
\]
Thus, in the finite setting considered here, there is no hidden obstruction beyond the chart-visible excess detected by \(\CA\).  The criterion is also effective: one computes \(\CA(S)\) for finite chart-local premise sets and compares the result with the relevant local closure.

The indexed truth-space layer records one interpretation of the same closure data.  An indexed truth space represents an element \(\varphi\) by its region
\[
  V_{\X}(\varphi)=\{i\in I:\varphi\in\tau(i)\}
\]
inside a selected finite family of closed theories.  The full truth-region representation theorem shows that, when every closed theory is used as an index, region inclusion recovers the original closure consequence relation exactly.  Reduced indexed spaces are therefore not neutral summaries: by omitting separating closed theories, they may validate region consequences that are not consequences of the underlying closure operator.

The decision-bookkeeping layer is also separate from the atlas obstruction theorem.  For an element \(a\) and a designated formal opposite \(\bar a\), the general decomposition is the four-region partition
\[
  I=P_a\sqcup N_a\sqcup B_a\sqcup U_a,
\]
where the both-region \(B_a=V(a)\cap V(\bar a)\) records indices whose assigned closed theories contain both designated elements.  This is only membership data in the chosen indexed space.  It is not a semantics of negation, and it does not by itself assert anything about contradiction, paraconsistency, or classical negation.

The gluing theorem is deliberately separate from the closure-level obstruction.  Once \(\CA\) has been fixed, overlap-compatible local closed theories glue by ordinary union:
\[
  \Tcup=\bigcup_{\alpha\in A}T_\alpha.
\]
Compatibility guarantees that \(\Tcup\) restricts back to the prescribed local theories, and local closedness guarantees
\[
  \CA(\Tcup)=\Tcup.
\]
Thus exact compatible local closed theories have no further canonical gluing obstruction.  The nontrivial obstruction studied in the paper lies in conservative globalization of local consequence relations, not in the final union step for already compatible closed local data.

\subsection*{Limitations and nonclaims}

The framework is intentionally finite.  Infinite universes or infinite atlases would require additional hypotheses, transfinite iteration, or a different construction of least global closure.  None of those extensions is treated here.  The paper also does not define a topology of truth regions, even though the terminology of regions suggests possible later geometric or topological refinements.  At this stage, truth regions are subsets of a finite index set, and closure atlases are finite combinatorial objects.

The notation \(\bar a\) denotes a designated formal opposite, not full logical negation unless extra structure is supplied.  Likewise, the both-region \(B_a\) should not be read as a claim about contradiction or about which logic of contradiction is appropriate.  The paper deliberately avoids using paradoxes or independence phenomena as arguments against ordinary classical reasoning.  Nothing proved here implies that classical logic is false, that the law of excluded middle fails globally, or that standard foundations are invalid.

The results should also not be read as a general semantics of mathematical truth.  The paper studies closure operators and finite families of closed theories as mathematical data.  Its claims are structural: how region representations behave, how conservative extensions preserve old consequence, how closure atlases generate global closure, and when that global closure is conservative over each chart.

\subsection*{Relation to the broader program}

The phrase \emph{Liberal Mathematics} is best understood here as the name of a broader research program concerned with indexed, local, and theory-relative mathematical structure.  The present article proves only one finite closure-theoretic component of that program.  Its content is independent of any stronger philosophical thesis: the definitions and theorems stand or fall as claims about finite closure systems, finite indexed truth spaces, and finite closure atlases.

This restraint is important.  The paper does not propose a replacement for classical logic or for standard foundations.  It gives a small mathematical language for a specific closure-theoretic phenomenon: elements may be represented by regions across selected closed theories, and local closure systems may or may not admit conservative global realization.  That is the level at which the contribution should be evaluated.

\subsection*{Outlook: conservative completions}

The canonical gluing theorem uses the least global closure \(\CA\).  A natural next problem concerns noncanonical global conservative realizations.  Such realizations may add mixed-premise consequences that are invisible inside every single chart while still preserving all chart-internal consequence relations.

For example, take identity charts on
\[
  \Sig_1=\{p,r\},
  \qquad
  \Sig_2=\{q,r\},
\]
and local closed theories \(T_1=\{p\}\) and \(T_2=\{q\}\).  The canonical union is \(\Tcup=\{p,q\}\), and \(r\) is not forced by the atlas-generated closure.  Nevertheless, a noncanonical conservative global closure may add the mixed rule \(\{p,q\}\Rightarrow r\), because no single chart sees both premises at once.

This points toward a later theory of conservative global completions of compatible local data.  Such a theory would ask which mixed consequences can be added without changing any chart-internal closure relation, and how the resulting finite family of admissible global closures should be organized.  Those questions are not used in the present paper.  The present contribution is the first finite layer: indexed truth spaces, four-region decision decompositions, conservative extension, closure-atlas obstruction, and canonical gluing under the atlas-generated closure.

\bibliographystyle{amsplain}
\bibliography{refs}
\end{document}